\newtheorem{lemma}{Lemma}
\newtheorem{corollary}{Corollary}
\begin{document}
%
\title{Reconfigurable Intelligent Surface Deployment for Wideband Millimeter Wave Systems}
%
%
%

\author{Xiaohao~Mo,~
        Lin~Gui,~\IEEEmembership{Member,~IEEE,}
        Kai~Ying,~\IEEEmembership{Senior Member,~IEEE,}
        Xichao~Sang,~
        and~Xiaqing~Diao~
\thanks{X. Mo, L. Gui, K. Ying, X. Sang and X. Diao are with the Department of Electronic Engineering, Shanghai Jiao Tong University, Shanghai 200240, China (email: mxh\_sjtu@sjtu.edu.cn; guilin@sjtu.edu.cn; yingkai0301@sjtu.edu.cn; sang\_xc@sjtu.edu.cn; xqdiao99@sjtu.edu.cn).}
\thanks{L. Gui is the corresponding author.}}

\maketitle

\begin{abstract}
The performance of wireless communication systems is fundamentally constrained by random and uncontrollable wireless channels. Recently, reconfigurable intelligent surfaces (RIS) has emerged as a promising solution to enhance wireless network performance by smartly reconfiguring the radio propagation environment. While significant research has been conducted on RIS-assisted wireless systems, this paper focuses specifically on the deployment of RIS in a wideband millimeter wave (mmWave) multiple-input-multiple-output (MIMO) system to achieve maximum sum-rate. First, we derive the average user rate as well as the lower bound rate when the covariance of the channel follows the Wishart distribution. Based on the lower bound of users' rate, we propose a heuristic method that transforms the problem of optimizing the RIS's orientation into maximizing the number of users served by the RIS. Simulation results show that the proposed RIS deployment strategy can effectively improve the sum-rate. Furthermore, the performance of the proposed RIS deployment algorithm is only approximately 7.6\% lower on average than that of the exhaustive search algorithm.

\end{abstract}

\begin{IEEEkeywords}
  Reconfigurable intelligence surfaces (RIS), RIS deployment, average rate, millimeter wave, heuristic.
\end{IEEEkeywords}

%
\IEEEpeerreviewmaketitle

\section{Introduction}
\IEEEPARstart{O}{wing} to the capability of dynamically reconfiguring the radio propagation environment and its characteristic of low cost, low power consumption and  programmability and easy deployment, the reconfigurable intelligent surface (RIS) can improve the performance of coverage enhancement, interference suppression, physical layer security, high-precision positioning, and other aspects by interacting with user devices, thereby further enhancing the performance and resource utilization efficiency of communication systems, meeting the requirements of next-generation mobile communication systems\cite{basar2019wireless,di2020smart,huang2019reconfigurable}.
With the help of RIS, a  smart radio environment (SRE) \cite{renzo2019smart,wu2019towards} can be built to meet the needs of the future mobile communication system.

Recently, researchers have been enthusiastically fostering the RIS for wireless communication improvement, the interesting topics are not limited to physical prototype \cite{arun2020rfocus}, channel estimation \cite{wei2021channel}, joint precoding/beamforming \cite{wu2019intelligent,huang2018achievable,zhang2021joint,yang2019irs}, etc. However, the correct deployment of RIS is also a significantly crucial research subject. The positioning of RIS directly impacts the quality of the BS-RIS link and the RIS-UE link, thereby determining the users' Quality of Service (QoS). Currently, research on deployment strategies of RIS remains an open area of investigation.

The three-dimensional (3D) ray-tracing channel simulation of RIS is studied in \cite{Huang_Novel_2022}, and the proposed model in verified in multiple aspects, such as received power, channel capacity, and angular power spectral density (PSD). In addition, the impact of different deployment positions of the RIS on coverage range is evaluated, which indirectly emphasizes the importance of correctly configuring the RIS' location.
Authors in \cite{deb2021ris} and \cite{ghatak2021placement} utilize the characteristic of the RIS that can create virtual line-of-sight (LoS) to {overcome} obstacles between transceivers. Specifically, authors in \cite{deb2021ris} use a graph theory-based method to minimize the number of RISs, ensuring that there exists at least a direct LoS or virtual LoS between device pairs for the millimeter wave (mmWave) device to device (D2D) system. In \cite{ghatak2021placement}, a RIS deployment strategy is proposed to minimize the user blockage probability based on the given obstacle distribution in a single-user single-input-single-output (SISO) scenario.
{Moreover}, authors in \cite{zeng2020reconfigurable,nemati2020ris,cui2021snr} intend to increase the coverage of the cell to serve more users by deploying the RIS. In the single-user SISO scenario, authors in \cite{zeng2020reconfigurable} take the orientation of the RIS into consideration, and propose a coverage maximization algorithm to determine the orientation and location of the RIS. In \cite{nemati2020ris} and \cite{cui2021snr}, optimal RIS deployment strategies are proposed to improve the signal-to-interference ratio (SIR) and signal-to-noise ratio (SNR) of edge users of the cell, respectively.

Another deployment strategy is to design {with} the purpose of increasing the average/sum rate of users in the cell \cite{bie2022deployment,xu2021reconfigurable,tian2022optimizing,ntontin2020reconfigurable,chen2022reconfigurable,zhang2022reconfigurable,liu2020ris}. Both authors in \cite{bie2022deployment} and \cite{xu2021reconfigurable} compare the performance of the RIS-based system and the relay-based system in the end-to-end SISO scenario. The upper bound capacity of both RIS and relay systems are deduced in \cite{bie2022deployment}, and the optimal location of the RIS is also determined. In \cite{xu2021reconfigurable}, the 3D path loss model is introduced, and the optimal number of phase shifters of the RIS is optimized. {Additionally}, in \cite{tian2022optimizing}, the area average rate is maximized by optimizing the location, height and downtilt of the RIS in the mmWave vehicular system. In the highly directional mmWave links, the approximate closed-form expression of received signal power is provided in \cite{ntontin2020reconfigurable}, and the location of RIS is optimized by maximizing the end-to-end SNR. In the RIS-assisted non-orthogonal multiple access (NOMA) system, a machine-learning (ML) based method is proposed to determine the optimal location of the RIS \cite{liu2020ris}. Specifically, authors firstly propose a novel long short-term memory (LSTM) to predict users' traffic demand and then utilize the decaying double deep Q-network ($\text{D}^\text{3} \text{QN}$) to solve the deployment problem. Furthermore, authors in \cite{Qin_Indoor_2022} investigate the deployment strategy of RIS in indoor scenarios, aiming to minimize the interruption probability considering the distribution and mobility model of human bodies. The specific location of the RIS is determined by exhaustive search method.

However, the existing literature on communication models in wireless systems tends to be relatively simple, with many of them only considering the SISO scenario, and effective optimization algorithms for RIS deployment have yet to be developed. Furthermore, these models do not typically account for the long-term geographic distribution of users, since the location of the RIS is usually fixed and cannot be casually moved. In this paper, we focus on the wideband RIS-assisted mmWave multiple-input-multiple-output (MIMO) system, and aim to establish an optimization problem for determining the optimal location of the RIS based on the long-term geographic distribution of users, and propose a heuristic method for solving it. The main contributions of this paper are summarized as follows:

\begin{itemize}
  \item This paper investigates the optimization problem of the deployment of the RIS in a wideband mmWave MIMO system, with the objectives of maximizing the sum-rate of users in a cell. In addition to optimizing the spatial coordinates of the RIS, we also consider the orientation of the RIS. Furthermore, taking into account the variations in location of users and the fixed position of the RIS, a new deployment strategy based on the long-term geographic distribution of users is proposed.
  \item Due to the high coupling between the location of the RIS and the channel parameters, we firstly derive the average user rate and its lower bound by exploiting the characteristics of the channel covariance matrix following the Wishart distribution. A heuristic algorithm based on the lower bound rate is proposed, and the sample average approximation (SAA) method is used to handle the expectation computation in the objective function. Specifically, to the best of our knowledge, the optimization of the orientation of the RIS cannot be solved directly due to its tight coupling with channel parameters. Therefore, we transform the optimization problem to maximize the number of users served by the RIS. For the remaining location parameters, we find the optimal solution by analyzing the derivative of the lower bound rate. At last, the complexity and convergence analysis of the proposed algorithm are presented.
  \item The simulation results present comparisons of RIS deployment strategies in different geographic distributions of users. Our findings show that the sum-rate based on the proposed heuristic method outperforms that of the stochastic gradient descent (SGD) algorithm and is comparable to that of the exhaustive search method. We also investigate the effects of different RIS location parameters on system performance through numerical simulations.
\end{itemize}

Finally, we compare the proposed scheme with existing schemes in Table \ref{table-compare}.
\begin{table}[hpt]
  \caption{Compare with existing schemes}
  \label{table-compare}
  \centering
  \begin{tabular}{|c|c|c|c|c|c|c|}
    \hline
         & \cite{zeng2020reconfigurable} & \cite{nemati2020ris} &  \cite{bie2022deployment} & \cite{tian2022optimizing} & \cite{ntontin2020reconfigurable} & Proposed \\
    \hline
    Single User  &   $\checkmark$   &     & $\checkmark$    & $\checkmark$    & $\checkmark$     &     \\
    \hline
    Multiple users   &       &  $\checkmark$    &     &       &      &  $\checkmark$  \\
    \hline
    Orientation of the RIS   &  $\checkmark$    &     &    & $\checkmark$     &     & $\checkmark$   \\
    \hline
    Fixed location of users  &  $\checkmark$  & $\checkmark$    & $\checkmark$   &     &$\checkmark$     & \\
    \hline    
    {\makecell[c]{Long-term geographic  \\distribution of users}}&      &     &    &     &     & 
     $\checkmark$    \\
     \hline
     Capacity   &   &  & $\checkmark$   & $\checkmark$     & $\checkmark$     & $\checkmark$    \\
     \hline
     Coverag  & $\checkmark$     & $\checkmark$    &    & $\checkmark$     &     & \\
     \hline
     mmWave  &  &  &    & $\checkmark$     & $\checkmark$     & $\checkmark$  \\
     \hline
     Sub-6 G  & $\checkmark$     &$\checkmark$    & $\checkmark$   &    &     & \\
     \hline    
     Wideband &  & &  &  &  &  $\checkmark$ \\
     \hline      
  \end{tabular}
\end{table}

The rest of this paper is organized as follows. Section II introduces the system model, the channel model and the RIS deployment optimization problem formulation. In order to solve the optimization problem of RIS' location efficiently, the average user rate expression is derived in Section III. In Section IV, the heuristic RIS deployment strategy is proposed as well as the analysis of computation complexity and convergence, and the optimization of RIS' phase shifters is also presented. The simulation results are presented in Section V. Finally, the conclusion and future works are discussed in section VI.

\section{System Model}
\begin{figure}[hbtp]
  \centering
  \includegraphics[width = 80mm]{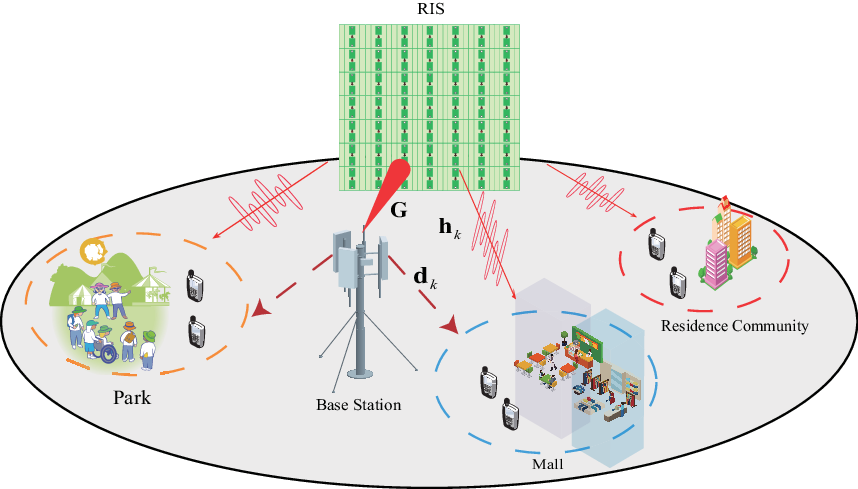}
  \caption{RIS-assisted mmWave cellular system.}
  \label{fig-systemmodel} 
\end{figure}
As shown in Fig. \ref{fig-systemmodel}, we consider a wideband RIS-assisted mmWave MIMO system, where a BS and an RIS serve $K$ single-antenna users simultaneously by $M$ subcarriers. Specifically, the BS is equipped with $N_{t}$-element uniform linear array (ULA) antennas and the RIS is equipped with $N_{r}$-element uniform planar array (UPA) phase shifters. We assume that the RIS can be controlled by the BS through a wired or wireless link. The users are distributed in various locations, like communities, shopping malls, parks, etc. Besides, key parameters are listed in Table \ref{key-para}.

\begin{table*}[hpt]
  \caption{Definitions of significant parameters.}
  \label{key-para}
  \centering
  \begin{tabular}{|c|l|}  
    \hline
    \textbf{Parameters}      & \makecell[c]{\textbf{Definitions}}  \\ 
    \hline
    $K / M / N_{t} / N_{r}$  & The number of users$/$subcarriers$/$BS antennas$/$RIS elements \\
    \hline
    $\mathcal{K}/\mathcal{M}/\mathcal{N}{t}/\mathcal{N}_{r}$  & The set of users$/$subcarriers$/$BS antennas$/$RIS elements \\
    \hline
    $\bar{\mathbf{G}}_{m}/\bar{\mathbf{d}}_{k,m}/\bar{\mathbf{h}}_{k,m}$  & The LoS channel component of BS-RIS link$/$BS-user $k$ link$/$RIS-user $k$ link at the $m$-th subcarrier \\
    \hline
    $\tilde{\mathbf{G}}_{m}/\tilde{\mathbf{d}}_{k,m}/\tilde{\mathbf{h}}_{k,m}$  & The NLoS channel component of BS-RIS link$/$BS-user $k$ link$/$RIS-user $k$ link at the $m$-th subcarrier \\
    \hline
    $\beta_{0} / \beta_{1,k} / \beta_{2,k} $ & The pathloss of BS-RIS link$/$BS-user $k$ link$/$RIS-user $k$ link \\
    \hline
    $K_{0} / K_{1} / K_{2} $ & The Rician factor of BS-RIS link$/$BS-user $k$ link$/$RIS-user $k$ link \\
    \hline
    $\omega_{k}\in \{0,1\}$ & The indicator of whether the user $k$ is served by the RIS \\
    \hline
    $ \mathbf{p} = \left(d_0,\varphi_0,h_{0}\right) / \varphi_{R}$ & The 3D coordinate of the RIS$/$The orientation of the RIS\\
    \hline
     $ \mathbf{q}_{k} = \left(d_k,\varphi_k,h_{u}\right)$ &  The 3D coordinate of user $k$\\
    \hline
  \end{tabular}
\end{table*}

\subsection{Channel Model}

Let $\mathbf{d}_{k,m} \in \mathbb{C}^{N_{t} \times 1}$ and $\mathbf{h}_{k,m} \in \mathbb{C}^{N_{r} \times 1}$ denote channel from the BS to user $k$ and from the RIS to user $k$ at the $m$-th subcarrier. $\mathbf{G}_{m} \in \mathbb{C}^{N_{t} \times N_{r}}$ denotes the channel between the BS and RIS at $m$-th subcarrier. According to \cite{wang_Pervasive_2022}, channels can be divided into two parts, which are LoS and non-line-of-sight (NLoS) component. Then, in this paper, channels can be further expressed as 
\begin{align}
  \left\{\begin{array}{ll}
  \mathbf{G}_{m} &= \sqrt{\beta_{0}\frac{K_{0}}{K_{0}+1}} \bar{\mathbf{G}}_{m} + \sqrt{\beta_{0}\frac{1}{K_{0}+1}} \tilde{\mathbf{G}}_{m}, \\
  \mathbf{d}_{k,m} &= \sqrt{\beta_{1,k}\frac{K_{1}}{K_{1}+1}}\bar{\mathbf{d}}_{k,m} + \sqrt{\beta_{1,k}\frac{1}{K_{1}+1}}\tilde{\mathbf{d}}_{k,m}, \\
  \mathbf{h}_{k,m} &= \sqrt{\beta_{2,k}\frac{K_{2}}{K_{2}+1}}\bar{\mathbf{h}}_{k,m} + \sqrt{\beta_{2,k}\frac{1}{K_{2}+1}}\tilde{\mathbf{h}}_{k,m},
  \end{array} \right.
\end{align} 
where $\bar{\mathbf{G}}_{m}$, $\bar{\mathbf{d}}_{k,m}$ and $\bar{\mathbf{h}}_{k,m}$ are the LoS component of $\mathbf{G}_{m}$, $\mathbf{d}_{k,m}$ and $\mathbf{h}_{k,m}$, respectively, while $\tilde{\mathbf{G}}_{m}$, $\tilde{\mathbf{d}}_{k,m}$ and $\tilde{\mathbf{h}}_{k,m}$ are the NLoS component. $\beta_{0}$, $\beta_{1,k}$ and $\beta_{2,k}$ represent the large scale path loss of the BS-RIS link, BS-user $k$ link and RIS-user $k$ link, respectively. $K_{i} \left(i=0,1,2\right)$ is the Rician factor of the corresponding channels. Based on \cite{Overview_Heath_2016} and \cite{Millimeter_Rappaport_2013}, the LoS channel components can be respectively expressed as
\begin{align}
  \left\{\begin{array}{ll}
  \bar{\mathbf{G}}_{m} &= \mathbf{b}^{H}\left( \phi_{0,m} \right) \mathbf{a}\left( \theta_{0,m}^{a}, \theta_{0,m}^{e}\right), \\
  \bar{\mathbf{d}}_{k,m} &= \mathbf{b}^{H}\left( \phi_{1,k,m} \right), \\
  \bar{\mathbf{h}}_{k,m} &= \mathbf{a}^{H}\left( \theta_{2,k,m}^{a}, \theta_{2,k,m}^{e}\right),
  \end{array}\right.
\end{align}
where $\phi_{0,m}$ denotes the spatial direction at the $m$-th subcarrier, and $\theta_{0,m}^{a}$ ($\theta_{0,m}^{e}$) is the azimuth (elevation) spatial direction at the $m$-th subcarrier of the BS-RIS link. $\phi_{1,k,m}$ is the spatial direction at BS on the $m$-th subcarrier of the BS-user $k$ link, and $\theta_{2,k,m}^{a}$ ($\theta_{2,k,m}^{e}$) is the azimuth (elevation) spatial direction at RIS on the $m$-th subcarrier of the RIS-user $k$ link. And $\mathbf{b}\left( \cdot \right)$ and  $\mathbf{a}\left( \cdot \right)$ are the array response vectors for ULA and UPA, respectively. The detailed expressions are shown as
\begin{align}
  \mathbf{b}_{N_{t}}\left(\vartheta\right) &= \frac{1}{\sqrt{N_{t}}} \left[1, \ldots, e^{j 2 \pi  \vartheta},\ldots,e^{j 2 \pi (N_{t}-1) \vartheta}\right]^{T},  \\
  \mathbf{a}_{N_{r}}\left(\vartheta^{a}, \vartheta^{e}\right)&=\frac{1}{\sqrt{N_{r}}}\left[1, \ldots, e^{j 2 \pi  \vartheta^{a}},\ldots,e^{j 2 \pi (N_{r}^{x}-1) \vartheta^{a}}\right]^{T}  \notag \\
  &\otimes \left[1, \ldots, e^{j 2 \pi  \vartheta^{e}},\ldots,e^{j 2 \pi (N_{r}^{y}-1) \vartheta^{e}}\right]^{T}, 
\end{align}
where $N_{r} = N_{r}^{x} \times N_{r}^{y}$. 
Due to the wideband effect \cite{Beamspace_Mo_2021,Beam_wang_2019}, the spatial direction is defined as the following formula, which is different among different subcarriers.
\begin{align}
  \left\{\begin{array}{l} 
    {\phi}_{0,m} = \frac{f_{m}}{c}d\sin\left( \tilde{\phi}_{0} \right), \\
    \theta_{0,m}^{j} = \frac{f_{m}}{c}d\sin\left( \tilde{\theta}_{0}^{j} \right), j = \{\mathrm{a},\mathrm{e}\}, \\
    {\phi}_{1,k,m} = \frac{f_{m}}{c}d\sin\left( \tilde{\phi}_{1,k} \right), k \in \mathcal{K}, \\
    \theta_{2,k,m}^{j} = \frac{f_{m}}{c}d\sin\left( \tilde{\theta}_{2,k}^{j} \right), j = \{\mathrm{a},\mathrm{e}\},k \in \mathcal{K}, \\
    \end{array}\right.
\end{align}
where $f_{m}=f_{c}+\frac{B}{M}\left(m-1-\frac{M-1}{2}\right)$ is the frequency of the $m$-th subcarrier with $f_c$ and $B$ representing center carrier frequency and the bandwidth, $c$ is the speed of light, and $d$ is the antenna spacing. $\tilde{\phi}_{0}$ is the angle of departure (AoD) from the BS towards the RIS, and $\tilde{\theta}_{0}^{a}$ ($\tilde{\theta}_{0}^{e}$) is the azimuth (elevation) angle of arrival (AoA) at the RIS from the BS. $\tilde{\phi}_{1,k}$ is the AoD from the BS towards user $k$, and $\tilde{\theta}_{2,k}^{a}$ ($\tilde{\theta}_{2,k}^{e}$) is the azimuth (elevation) AoD from the RIS to user $k$.

As for the NLoS channel component, we assume $\tilde{\mathbf{G}}_{m}$, $\tilde{\mathbf{d}}_{k,m}$ and $\tilde{\mathbf{h}}_{k,m}$ are independently and identically distributed (i.i.d) complex Gaussian random variables, whose elements are following the distribution of $\mathcal{CN}\left(0,1\right)$.

Then, the overall channel from the BS to user $k$ at the $m$-th subcarrier can be written as 
\begin{align}
  {\mathbf{h}}_{k,m}^{\text{eff}} = \mathbf{d}_{k,m} + \omega_{k} \mathbf{G}_{m} \boldsymbol{\Phi} \mathbf{h}_{k,m},
\end{align}
where $\boldsymbol{\Phi} = \text{diag} \left( \boldsymbol{\theta} \right) \in \mathbb{C}^{N_{r} \times N_{r}}$ is phase shifter matrix of RIS with constant constraint, i.e., $ | \theta_{i} | = 1, \forall i \in \mathcal{N}_{r}$. Besides, since the RIS can only enhance communication on one side, $\omega_{k} \in \{0,1\}$ denotes whether the user $k$ is served by the RIS. 

\subsection{RIS Deployment}
\begin{figure}[hbtp]
  \centering
  \includegraphics[width = 80mm]{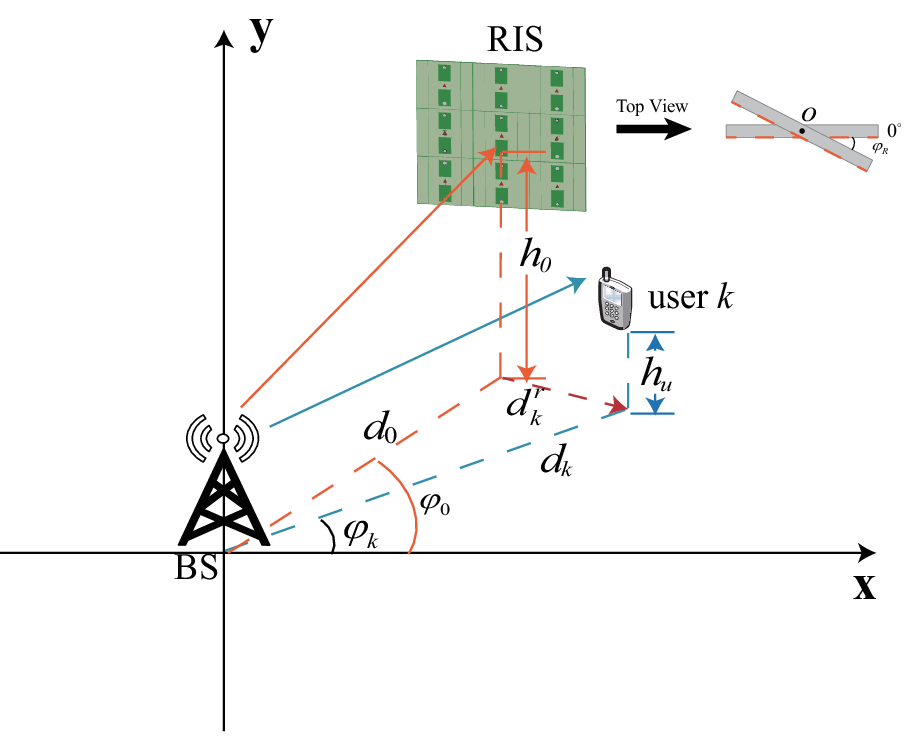}
  \caption{The geometric description of the system.}
  \label{fig-topview}
\end{figure}

Fig. \ref{fig-topview} shows the geometric description of the RIS-assisted mmWave cellular system, where the BS is located at the center of the cell, and the radius of a cell is {denoted} as $r$. The coordinates of the BS is $ \left(0,0,h_{B}\right)$, where the $h_{B}$ denotes the height of the BS. Let $ \mathbf{p} = \left(d_0,\varphi_0,h_{0}\right)$ and $\mathbf{q}_{k} = \left(d_k,\varphi_k,h_{u}\right)$ denote the location of RIS and user $k$, respectively. $d_{0}$ ($d_{k}$) is the horizontal distance from the BS towards the RIS (user $k$), $\varphi_{0}$ ($\varphi_{k}$) is the azimuth angle between the BS-RIS (user $k$) link and the positive of x axis, and $h_{0}$ ($h_{u}$) is the height of the RIS (all users). Moreover, we suppose that the orientation of the RIS is also a significant parameter to improve the system performance. 
$\varphi_{R}$ is defined as direction of the RIS' orientation between the RIS and the x-axis with counter clock-wise set to be positive. In this paper, we do not consider the elevation angle of the RIS, since the RIS is basically vertical to the ground. Moreover, we assume that the location of RIS cannot be adjusted casually, which should be optimized based on the users' long-term geographical distribution.

Based on the coordinates of the BS and the RIS, the AoD $\tilde{\phi}_{0}$ of the LoS channel component at BS can be expressed as $\varphi_{0}$, and the AoA at the RIS from the BS is 
\begin{align}
  \left\{\begin{array}{ll}
  \tilde{\theta}_{0}^{a} &= \frac{\pi}{2} - \varphi_{0} - \varphi_{R},  \\
  \tilde{\theta}_{0}^{e} &= \arctan \frac{ | h_{B} - h_{0} |}{d_{0}} .
  \end{array}\right.
\end{align} 

As for the LoS channel component of the RIS-user link, the distance between the RIS and user $k$ can be obtained by law of cosines, which is 
\begin{align} \label{dk_r}
  d_{k}^{r} = \sqrt{d_{0}^2 + d_{k}^2 - 2d_{0}d_{k}\cos(\varphi_{0} - \varphi_{k})}.
\end{align}
And the corresponding AoD can be denoted by 
\begin{align}
  \left\{\begin{array}{ll}
  \tilde{\theta}_{2,k}^{a} &= \arccos \frac{d_{0}^2 + (d_{k}^{r})^2 - d_{k}^{2}  }{2d_{0}d_{k}^{r}} - \left( \frac{\pi}{2} - \varphi_{0}  \right)  - \varphi_{R},  k \in \mathcal{K},\\
  \tilde{\theta}_{2,k}^{e} &= \arctan \frac{ | h_{u} - h_{0} |}{d_{k}^{r}}, k \in \mathcal{K}. 
  \end{array}\right.
\end{align}
 
When the BS and users are in the same side of the RIS, we suppose that users can be served by the RIS, i.e.,
\begin{equation}
  \left\{\begin{array}{l} 
    \omega_{k} = 1, \text{user $k$ in } \mathcal{S}_{R},\\
    \omega_{k} = 0, \text{otherwise},
    \end{array}\right.
\end{equation} 
where $\mathcal{S}_{R}$ denotes served area of the RIS.Obviously, the location of the RIS can not only affect the number of users that can be served by the RIS, but also the quality of the BS-RIS-user links. Therefore, it is important to optimize the location of the RIS.

\subsection{Problem Formulation}
With the help of the RIS, the downlink signal at user $k$ can be expressed as
\begin{align}
  y_{k,m} = \left({\mathbf{h}}_{k,m}^{\text{eff}} \right)^{H}\mathbf{x}_{m} + n_{k,m},
\end{align}
where $\mathbf{x}_{m} \in \mathbb{C}^{N_{t} \times 1}$ is the precoded transmitted signal, and $n_{k,m}$ denotes the white Gaussian noise following the distribution of $\mathcal{CN}\left(0,1\right)$. Specifically, the $\mathbf{x}_{m}$ can be further expressed as $\mathbf{x}_{m}  = \sum_{k=1}^{K}\sqrt{p_{k,m}}\mathbf{f}_{k,m}s_{k,m}$, where $p_{k,m}$ is the transmitted power for user $k$ at the $m$-th subcarrier, $\mathbf{f}_{k,m}$ is the precoding vector with normalized power and $s_{k,m}$ is the transmitted symbol to user $k$ at the $m$-th subcarrier. The sum of transmitted power at the BS is limited by $P_{T}$, which is $P_{T} = \sum_{m=1}^{M}\sum_{k=1}^{K}p_{k,m}$. Let $\mathbf{s}_{m} \triangleq\left[s_{1,m}, \cdots, s_{K,m}\right]^{T} \in \mathbb{C}^{K}$, and we suppose that the transmitted symbols have normalized power, i.e., $\mathbb{E}\left[\mathbf{s}_{m}\mathbf{s}_{m}^{H} \right]= \mathbf{I}_{K}$, $\forall m \in \mathcal{M}$.

Then, the signal to interference plus noise ratio (SINR) for user $k$ at the $m$-th subcarrier can be obtained as 
\begin{align}
  \mathrm{SINR}_{k.m} = \frac{p_{k,m} | \left( {\mathbf{h}}_{k,m}^{\text{eff}} \right)^{H} \mathbf{f}_{k,m}|^{2} }{\sum_{i \neq k}^{K}p_{i,m}|  \left( {\mathbf{h}}_{k,m}^{\text{eff}} \right)^{H} \mathbf{f}_{i,m}|^{2}  + \sigma_{k,m}^{2} },
\end{align}
where $\sigma_{k,m}^{2} = | n_{k,m}^{H}n_{k,m} |$ denotes the noise power, and we assume that $\sigma_{i,j}^{2} = \sigma_{\hat{i},\hat{j}}^{2} = \sigma_{2}$ for $\left(i,\hat{i}\right) \in \mathcal{K}$ and $\left(j,\hat{j}\right) \in \mathcal{M}$. 
To mitigate multi-user interference within the cell, we adopt zero forcing (ZF) precoder as in \cite{Xiao_Average_2022}.
The ZF precoder can be expressed as 
\begin{align}
  \mathbf{U}_{m} = \left({\mathbf{H}}_{m}^{\text{eff}}\right)^{H}\left( {\mathbf{H}}_{m}^{\text{eff}}\left({\mathbf{H}}_{m}^{\text{eff}}\right)^{H}\right)^{-1},
\end{align}
where $\left({\mathbf{H}}_{m}^{\text{eff}}\right) = \left[\left({\mathbf{h}}_{1,m}^{\text{eff}} \right)^{H};\cdots; \left({\mathbf{h}}_{K,m}^{\text{eff}} \right)^{H}\right] \in \mathbb{C}^{K \times N_{t}}$. As for the acquisition of CSI, the main idea is to exploit the sparsity of the mmWave channel or the RIS' cascaded channel and design efficient algorithms based on compressive sensing theory \cite{wei2021channel,Mo_Wideband_2022a}. It should be noted that we use instantaneous channel state information (CSI) for ZF precoding in this paper, and the ZF precoder $\mathbf{f}_{k,m}$ can be obtained by $ \frac{ \mathbf{u}_{k,m}}{\| \mathbf{u}_{k,m}\|}$ where $\mathbf{u}_{k,m}$ is the $k$-th column of $\mathbf{U}_{m}$. Due to the multi-user interference elimination by $\mathbf{f}_{k,m}$, the instantaneous rate of  user $k$ at the $m$-th subcarrier is
\begin{align}
  R_{k,m} = \log_{2}\left(1 +\frac{p_{k,m}  }{ \sigma^{2} \| \mathbf{u}_{k,m}\|^{2} }\right), 
\end{align}
where $\| \mathbf{u}_{k,m}\|^{2}$ can be obtained by $\left[\left( {\mathbf{H}}_{m}^{\text{eff}}\left({\mathbf{H}}_{m}^{\text{eff}}\right)^{H}\right)^{-1}\right]_{k,k}$. Moreover, due to the presence of random components in the channel, the average rate of user $k$ at the $m$-th subcarrier can be expressed as 
\begin{align} \label{R_km}
  R_{k,m}\left( \mathbf{p}, \varphi_{R}, \mathbf{q}_{k} \right) = {\mathbb{E}} \left[\log_{2}\left(1 +\frac{p_{k,m}  }{ \sigma^{2} \| \mathbf{u}_{k,m}\|^{2} }\right) \right],
\end{align}
where $\mathbb{E}$ is the expectation operator. Then, the sum-rate of the cell can be expressed as 
\begin{align}
  R \left( \mathbf{p},\varphi_{R}, \mathbf{Q} \right)  = \sum_{k=1}^{K} \sum_{m = 1}^{M} R_{k,m} \left( \mathbf{p}, \varphi_{R},\mathbf{q}_{k} \right) .
\end{align}
where $\mathbf{Q} = \{ \mathbf{q}_{1},\mathbf{q}_{2},\cdots,\mathbf{q}_{K}  \}$ denotes the locations of $K$ users.

In this paper, we want to find the optimal RIS' position to maximize the sum-rate. Due to the fixity of the RIS' location and the variations in location of users, we want to maximize the sum rate based on the long term users' geographical distribution by optimizing the location of RIS $\mathbf{p}$ and the orientation $\varphi_{R}$. Finally, the optimization problem is formulated as 
\addtocounter{equation}{0}
\begin{IEEEeqnarray}{lcll} 
  \mathcal{P} 1:  &\max _{ \mathbf{p},\varphi_{R}} \quad & \mathbb{E}_{\mathbf{Q}} \left[  {R}\left( \mathbf{p},\varphi_{R},\mathbf{Q}  \right)\right] \label{p1} \\
  &\text{s.t.}  & \mathbf{p} \in \mathcal{S},     \IEEEyessubnumber\label{p1_constraint1}\\
  &             & \varphi_{R} \in [ 0,2\pi ),    \IEEEyessubnumber\label{P1_constraint2}
\end{IEEEeqnarray}
where $\mathcal{S}$ denotes the area of the cell. (\ref{p1_constraint1}) implies that the RIS should be within the cell range. The objective function in $\mathcal{P} 1$ refers to the average achievable rate achieved by averaging throughout all the user locations, which follows a specific distribution.

Obviously, solving (\ref{p1}) is challenging for several reasons. Firstly, obtaining a definite closed-form expression of the objective function is difficult due to the expectation operator and the matrix inversion calculation. Secondly, the optimization variables are intricately coupled with the system parameters. Thirdly, the objective function is non-convex, which makes the problem intractable. For the stochastic optimization problem, an intuitive solution is to adopt the SGD-based method.
However, the gradient of location parameters cannot be obtained easily since the objective function involves matrix inversion. 
Inspired by \cite{Xie_Performance_2023,Li_Achievable_2023} and \cite{Singh_Performance_2023}, we firstly give the analytical expressions of the performance in Section \ref{sec_appro_rate}, which refer to the approximation of average user rate of $R_{k,m}$ as well as its lower bound, and then propose a heuristic algorithm in section IV-A.

\section{Average Rate approximation} \label{sec_appro_rate}
In this section, we derive the average user rate expression of $R_{k,m}$ based on the channel model described in Section II-B, and obtain the corresponding lower bound. This lower bound serves as the basis for subsequent optimization of RIS location.

From (\ref{R_km}), we observe that we only need to calculate $ \| \mathbf{u}_{k,m}\|^{2} $, i.e., $\left[\left( {\mathbf{H}}_{m}^{\text{eff}}\left({\mathbf{H}}_{m}^{\text{eff}}\right)^{H}\right)^{-1}\right]_{k,k}$. Since the expression of $ \| \mathbf{u}_{k,m}\|^{2} $ involves matrix inversion, it is hard to obtain the solution. Therefore, in this paper, we replace $ \| \mathbf{u}_{k,m}\|^{2} $ by finding the average rate. According to \cite{muirhead2009aspects}, the matirx of ${\mathbf{H}}_{m}^{\text{eff}}\left({\mathbf{H}}_{m}^{\text{eff}}\right)^{H}$ follows the non-central Wishart distribution $\mathcal{W}_{K}^{m}\left(N_{t}, \boldsymbol{\mu}_{m}, \boldsymbol{\Sigma}_{m}\right)$, where
\begin{align}
  [\boldsymbol{\mu}_{m}]_{i,:} &=\sqrt{\beta_{1,i}\frac{K_{1}}{K_{1}+1}}\bar{\mathbf{d}}_{i}^{H} \\
  &+ \omega_{i}\sqrt{\beta_{0}\beta_{2,i}\frac{K_{0}K_{2}}{\left(K_{0}+1 \right)\left(K_{2}+1 \right)}}\left( \bar{\mathbf{G}}_{m}\mathbf{\Phi}  \bar{\mathbf{h}}_{i,m} \right)^{H},  \notag  \\  
  [\boldsymbol{\Sigma}_{m}]_{i, j}&=\mathbb{E} \left[ \left({\mathbf{h}}_{i,m}^{\text{eff}} \right)^{H}{\mathbf{h}}_{j,m}^{\text{eff}} \right],  \label{sigma}
\end{align}
wherein $\left(i,j\right) \in \mathcal{K}$. For calculating $ \mathbb{E} \left[ \left({\mathbf{h}}_{i,m}^{\text{eff}} \right)^{H}{\mathbf{h}}_{j,m}^{\text{eff}} \right]$, we have the following Lemma.

\begin{lemma} \label{lemma1}
  The approximate value of $\mathbb{E} \left[ \left({\mathbf{h}}_{i,m}^{\text{eff}} \right)^{H}{\mathbf{h}}_{j,m}^{\text{eff}} \right]$ can be expressed as
  \begin{equation}
  \begin{aligned} \label{res_hh}
    &\mathbb{E} \left[ \left({\mathbf{h}}_{i,m}^{\text{eff}} \right)^{H}{\mathbf{h}}_{j,m}^{\text{eff}} \right] =   \\
    &\left\{\begin{array}{l} 
      \beta_{1,i}+\frac{\omega_{i}\left(K_{0} + K_{2}+1\right)\beta_{0}\beta_{2,i} }{\left(K_{0}+1 \right)\left(K_{2}+1 \right)} +\frac{\omega_{i}K_{0}K_{2}\beta_{0}\beta_{2,i} }{\left(K_{0}+1 \right)\left(K_{2}+1 \right)} \Gamma_{i,j,m}, i = j,\\
      \frac{\omega_{i}\omega_{j}K_{0}K_{2}\beta_{0}\sqrt{\beta_{2,i}\beta_{2,j}} }{\left(K_{0}+1 \right)\left(K_{2}+1 \right)}\Gamma_{i,j,m}, \quad \quad \quad \quad \quad \quad \quad \quad  \text{   } i \neq j,
      \end{array}\right.
   \end{aligned} 
  \end{equation}
   where $ \Gamma_{i,j,m}$ can be expressed as 
   \begin{equation}
   \begin{aligned}
    \Gamma_{i,j,m} = \mathbb{E} \left[ \mathbf{a}\left( \theta_{2,i,m}^{e}, \theta_{2,i,m}^{a}\right) \mathbf{\Phi}^{H}\mathbf{a}^{H}\left( \theta_{0,m}^{e}, \theta_{0,m}^{a}\right) \right.  \\
    \left. \mathbf{a}\left( \theta_{0,m}^{e}, \theta_{0,m}^{a}\right)\mathbf{\Phi}\mathbf{a}^{H}\left( \theta_{2,j,m}^{e}, \theta_{2,j,m}^{a}\right)  \right].
   \end{aligned}
  \end{equation}
\end{lemma}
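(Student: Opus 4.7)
The starting point is to substitute ${\mathbf{h}}_{k,m}^{\text{eff}} = \mathbf{d}_{k,m} + \omega_k \mathbf{G}_m \boldsymbol{\Phi} \mathbf{h}_{k,m}$ and expand the inner product into four blocks: a direct--direct term $\mathbf{d}_{i,m}^H \mathbf{d}_{j,m}$, two cross terms $\omega_j \mathbf{d}_{i,m}^H \mathbf{G}_m \boldsymbol{\Phi} \mathbf{h}_{j,m}$ and its conjugate, and a cascaded--cascaded term $\omega_i\omega_j \mathbf{h}_{i,m}^H \boldsymbol{\Phi}^H \mathbf{G}_m^H \mathbf{G}_m \boldsymbol{\Phi} \mathbf{h}_{j,m}$. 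Inside each block I substitute the Rician LoS$+$NLoS decomposition and then invoke three independence facts that the model supplies: (i) $\tilde{\mathbf{G}}_m$, $\tilde{\mathbf{d}}_{k,m}$, $\tilde{\mathbf{h}}_{k,m}$ are mutually independent across the three physical links; (ii) the NLoS components for distinct users are mutually independent; (iii) every NLoS component is zero--mean. Under (i)--(iii), any sub--term that contains a single unpaired NLoS factor drops out of the expectation, which will kill the majority of the expansion.

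\textbf{The $i=j$ case.} The direct block collapses via $\mathbb{E}[\|\mathbf{d}_{i,m}\|^2] = \tfrac{K_1}{K_1+1}\beta_{1,i}\|\bar{\mathbf{d}}_{i,m}\|^2 + \tfrac{1}{K_1+1}\beta_{1,i}\,\mathbb{E}\|\tilde{\mathbf{d}}_{i,m}\|^2 = \beta_{1,i}$ under the natural unit--norm normalization of the array responses and the per--element normalization of the NLoS vector. For the cascaded block, I expand both $\mathbf{G}_m$ and $\mathbf{h}_{i,m}$ into LoS$+$NLoS so that $\mathbf{G}_m\boldsymbol{\Phi}\mathbf{h}_{i,m}$ is a sum of four vectors $\mathbf{v}_1,\ldots,\mathbf{v}_4$ indexed by (LoS/NLoS $\mathbf{G}$) $\times$ (LoS/NLoS $\mathbf{h}$). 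By (i)--(iii) every off--diagonal $\mathbb{E}[\mathbf{v}_a^H\mathbf{v}_b]$ vanishes, so only four squared norms survive. Using the rank--one structure $\bar{\mathbf{G}}_m^H\bar{\mathbf{G}}_m = \mathbf{a}\mathbf{a}^H$, the identity $\mathbb{E}[\tilde{\mathbf{G}}_m^H\tilde{\mathbf{G}}_m] \propto \mathbf{I}_{N_r}$, $\mathbb{E}[\tilde{\mathbf{h}}_{i,m}\tilde{\mathbf{h}}_{i,m}^H]\propto \mathbf{I}_{N_r}$, and the unimodular constraint $\boldsymbol{\Phi}\boldsymbol{\Phi}^H = \mathbf{I}$ inside $\operatorname{tr}(\cdot)$, the pure LoS--LoS piece becomes $\tfrac{K_0K_2}{(K_0+1)(K_2+1)}\beta_0\beta_{2,i}\Gamma_{i,i,m}$, and the three NLoS--involving pieces combine to $\tfrac{K_0 + K_2 + 1}{(K_0+1)(K_2+1)}\beta_0\beta_{2,i}$, which is exactly the middle coefficient in (\ref{res_hh}).

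\textbf{The $i\neq j$ case and the approximation.} When $i\neq j$, every term in the direct block and in the cascaded block that contains an NLoS factor of user $i$ or $j$ vanishes by (ii)--(iii), because the paired factor comes from a different user. Only the pure LoS--LoS piece of the cascaded block survives, yielding the single $\Gamma_{i,j,m}$ contribution in (\ref{res_hh}). The word ``approximate'' in the lemma enters through the remaining two places in the expansion: the direct--cascaded cross terms $\mathbb{E}[\mathbf{d}_{i,m}^H\mathbf{G}_m\boldsymbol{\Phi}\mathbf{h}_{j,m}]$ (surviving only via LoS$\times$LoS$\times$LoS as a scalar proportional to $\mathbf{b}^H(\phi_{1,i})\mathbf{b}(\phi_0)\,\mathbf{a}^H(\theta_0)\boldsymbol{\Phi}\bar{\mathbf{h}}_{j,m}$) and, for $i\neq j$, the NLoS$_G$/LoS$_h$/LoS$_h$ piece of the cascaded block proportional to $\bar{\mathbf{h}}_{i,m}^H\bar{\mathbf{h}}_{j,m}$. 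Both reduce to inner products of steering vectors at distinct angles, which are $O(1/\sqrt{N_t})$ or $O(1/\sqrt{N_r})$ by the standard asymptotic orthogonality of uniform--array responses and are therefore dropped. Collecting the surviving terms produces exactly (\ref{res_hh}).

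\textbf{Main obstacle.} The heavy lifting is the bookkeeping of the quartic cascaded form $\mathbb{E}[\mathbf{h}_{i,m}^H\boldsymbol{\Phi}^H\mathbf{G}_m^H\mathbf{G}_m\boldsymbol{\Phi}\mathbf{h}_{j,m}]$: in general there are sixteen LoS/NLoS sub--terms, and one has to argue systematically which vanish exactly by (i)--(iii), which collapse via $\bar{\mathbf{G}}_m^H\bar{\mathbf{G}}_m = \mathbf{a}\mathbf{a}^H$ and $\boldsymbol{\Phi}\boldsymbol{\Phi}^H = \mathbf{I}$ after a trace manipulation, and which are merely negligible and therefore absorbed into the word ``approximate.'' Making the second and third categories airtight (as opposed to an implicit large--array assumption) is the main care point of the argument.
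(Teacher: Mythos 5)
Your proposal is correct and follows essentially the same route as the paper's Appendix A: expand the effective channel into direct and cascaded blocks, split each factor into LoS and NLoS parts, kill every sub-term containing an unpaired zero-mean NLoS factor, collapse the quartic cascaded form via the rank-one structure of $\bar{\mathbf{G}}_m^H\bar{\mathbf{G}}_m$ and the trace identities $\boldsymbol{\Phi}\boldsymbol{\Phi}^H=\mathbf{I}$, $\mathbb{E}[\tilde{\mathbf{G}}_m^H\tilde{\mathbf{G}}_m]\propto\mathbf{I}$, and sum the three surviving NLoS-involving coefficients to $\tfrac{K_0+K_2+1}{(K_0+1)(K_2+1)}$. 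If anything, you are slightly more explicit than the paper about where the word ``approximate'' enters (the dropped steering-vector inner products at distinct angles), which the paper handles by simply declaring those terms zero when the angles are distinguishable.
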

  
\begin{proof}
Please see Appendix A.
\end{proof}

In this paper, we assume that $\boldsymbol{\Sigma}_{m}$ is an approximate diagonal matrix. This assumption is based on the following reasons. Firstly, since $\beta_{1,i} > \beta_{0}\beta_{2,i}$ \cite{wu2021intelligent}, which makes the values of diagonal elements are much larger than the non-diagonal elements. Secondly, when the differences among $\{ \{  \tilde{\theta}_{2,k}^{a} \}_{k \in \mathcal{K}} ,\tilde{\theta}_{0}^{a} \}$ and $\{ \{  \tilde{\theta}_{2,k}^{e} \}_{k \in \mathcal{K}} ,\tilde{\theta}_{0}^{e}  \}$ are lager, the value of $\Gamma_{i,j,m}$ is approximately equal to zero, which also makes $\boldsymbol{\Sigma}_{m}$ diagonal.

According to \cite{Steyn_Approximations_1972} and \cite[Th. 1]{Kong_Performance_2015}, the non-central Wishart distribution can be transformed into central Wishart distribution. Specifically, $\mathcal{W}_{K}^{m}\left(N_{t}, \boldsymbol{\mu}_{m}, \boldsymbol{\Sigma}_{m}\right)$ is approximated by $\mathcal{W}_{K}^{m}\left(N_{t}, \hat{\boldsymbol{\Sigma}}_{m}\right)$, where $\hat{\boldsymbol{\Sigma}}_{m} = \boldsymbol{\Sigma}_{m} + \frac{1}{N_{t}}\boldsymbol{\mu}_{m}\boldsymbol{\mu}_{m}^{H}$. Then, the average value $\mathbb{E} \left[ \ln \left(\left( {\mathbf{H}}_{m}^{\text{eff}}\left({\mathbf{H}}_{m}^{\text{eff}}\right)^{H}\right)^{-1}\right)_{k,k} \right]$ can be approximated by
\begin{align}
 \ln \left[  \hat{\boldsymbol{\Sigma}}_{m}^{-1} \right]_{k,k} - \psi \left(N_{t} - K + 1\right),  
\end{align}
where $\psi \left( \cdot \right)$ is a digamma function. The average rate of user $k$ at $m$-th subcarrier approximately equals to
\begin{align} \label{bar_rkm}
  \bar{R}_{k,m}=\log _{2}\left(1+\frac{\bar{p}_{k,m}}{\sigma^{2}} \frac{\exp \left(\psi\left(N_{t}-K+1\right)\right)}{\left[\hat{\boldsymbol{\Sigma}}_{m}^{-1}\right]_{k, k}}\right),
\end{align}
where $\bar{p}_{k,m} = \exp \left(\mathbb{E}\left[\ln p_{k,m}\right] \right) $. To calculate $\left[\hat{\boldsymbol{\Sigma}}_{m}^{-1}\right]_{k, k}$, we have the following Lemma.
\begin{lemma} \label{lemma2}
  $\left[\hat{\boldsymbol{\Sigma}}_{m}^{-1}\right]_{k, k}$ can be calculated as 
  \begin{align}
    \left[ \hat{\boldsymbol{\Sigma}}_{m}^{-1} \right]_{k,k} = \frac{1}{\kappa_{k}} - \frac{\tau \frac{|\Xi_{m,k}|^{2}}{\kappa_{k}^{2}}}{1 + \tau \sum_{i=1}^{K}\frac{|\Xi_{m,i}|^{2}}{\kappa_{i}}},
  \end{align}
  where $\kappa_{i}=\beta_{1,i}+\frac{\omega_{i}\left(K_{0} + K_{2}+1\right)\beta_{0}\beta_{2,i} }{\left(K_{0}+1 \right)\left(K_{2}+1 \right)}+\frac{K_{1}\beta_{1,i}}{N_{t}\left(K_{1}+1\right)}$, $\tau = \frac{K_{0}K_{2}}{N_{t}\left(K_{0}+1 \right)\left(K_{2}+1 \right)}$ and $\Xi_{m,i} = \omega_{i}\sqrt{\beta_{0}\beta_{2,i}}\bar{\mathbf{h}}_{i,m}^{H}\mathbf{\Phi}$ $\mathbf{a}^{H}\left( \theta_{0,m}^{e}, \theta_{0,m}^{a}\right)$.
\end{lemma}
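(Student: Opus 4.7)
The plan is to reduce $\hat{\boldsymbol{\Sigma}}_m$ to a diagonal matrix plus a rank-one perturbation, at which point the Sherman--Morrison identity delivers the claim in one line. All the real work lies in extracting that rank-one structure out of $\frac{1}{N_t}\boldsymbol{\mu}_m\boldsymbol{\mu}_m^H$ and grouping the remaining terms into the stated $\kappa_i$.

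First, under the paper's standing assumption that $\boldsymbol{\Sigma}_m$ is effectively diagonal, I would split the row $[\boldsymbol{\mu}_m]_{i,:}$ into its two summands and write $\boldsymbol{\mu}_m=\boldsymbol{\mu}_m^{(1)}+\boldsymbol{\mu}_m^{(2)}$. The key observation is that the LoS BS--RIS matrix factors as the outer product $\bar{\mathbf{G}}_m=\mathbf{b}(\phi_{0,m})\mathbf{a}^H(\theta_{0,m}^a,\theta_{0,m}^e)$, so every row of $\boldsymbol{\mu}_m^{(2)}$ is a scalar multiple of the same vector $\mathbf{b}^H(\phi_{0,m})$. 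Hence $\boldsymbol{\mu}_m^{(2)}=\mathbf{v}\,\mathbf{b}^H(\phi_{0,m})$ with $[\mathbf{v}]_i=\omega_i\sqrt{\beta_0\beta_{2,i}K_0K_2/[(K_0+1)(K_2+1)]}\bigl(\mathbf{a}^H\boldsymbol{\Phi}\bar{\mathbf{h}}_{i,m}\bigr)^{*}$, and using $\|\mathbf{b}(\phi_{0,m})\|^2=1$ I obtain $\boldsymbol{\mu}_m^{(2)}(\boldsymbol{\mu}_m^{(2)})^H=\mathbf{v}\mathbf{v}^H$, which is the sought rank-one block. Pulling the common constant $K_0K_2/[(K_0+1)(K_2+1)]$ and the factor $1/N_t$ out identifies the scalar $\tau$, and what remains inside $\mathbf{v}$ is exactly $\Xi_{m,i}$ up to a conjugate.

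Next I would dispose of the two auxiliary pieces. The diagonal of $\boldsymbol{\mu}_m^{(1)}(\boldsymbol{\mu}_m^{(1)})^H$ equals $\beta_{1,i}K_1/(K_1+1)$ because $\|\bar{\mathbf{d}}_i\|=1$; divided by $N_t$ this supplies the third summand $K_1\beta_{1,i}/[N_t(K_1+1)]$ in $\kappa_i$, while the first two summands of $\kappa_i$ are read off directly from the diagonal of $[\boldsymbol{\Sigma}_m]_{i,i}$ in Lemma~\ref{lemma1}. The off-diagonal of $\boldsymbol{\mu}_m^{(1)}(\boldsymbol{\mu}_m^{(1)})^H$ involves inner products $\bar{\mathbf{d}}_i^H\bar{\mathbf{d}}_j$ of steering vectors at distinct AoDs, and the cross block $\boldsymbol{\mu}_m^{(1)}(\boldsymbol{\mu}_m^{(2)})^H$ involves $\bar{\mathbf{d}}_i^H\mathbf{b}(\phi_{0,m})$; both are negligible in the same regime that justifies treating $\boldsymbol{\Sigma}_m$ as diagonal in the first place, so they drop out of the approximation. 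After these simplifications,
\[
\hat{\boldsymbol{\Sigma}}_m \;\approx\; \mathbf{D} + \tau\,\tilde{\mathbf{v}}\tilde{\mathbf{v}}^H,\qquad \mathbf{D}=\mathrm{diag}(\kappa_1,\dots,\kappa_K),\qquad [\tilde{\mathbf{v}}]_i=\Xi_{m,i}.
\]

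Finally, I would invoke Sherman--Morrison,
\[
(\mathbf{D}+\tau\,\tilde{\mathbf{v}}\tilde{\mathbf{v}}^H)^{-1} \;=\; \mathbf{D}^{-1} - \frac{\tau\,\mathbf{D}^{-1}\tilde{\mathbf{v}}\tilde{\mathbf{v}}^H\mathbf{D}^{-1}}{1+\tau\,\tilde{\mathbf{v}}^H\mathbf{D}^{-1}\tilde{\mathbf{v}}},
\]
and read off the $(k,k)$ entry, using $[\mathbf{D}^{-1}]_{k,k}=1/\kappa_k$, $[\mathbf{D}^{-1}\tilde{\mathbf{v}}\tilde{\mathbf{v}}^H\mathbf{D}^{-1}]_{k,k}=|\Xi_{m,k}|^2/\kappa_k^2$, and $\tilde{\mathbf{v}}^H\mathbf{D}^{-1}\tilde{\mathbf{v}}=\sum_{i=1}^{K}|\Xi_{m,i}|^2/\kappa_i$. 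This reproduces the stated closed form. The main obstacle I anticipate is the bookkeeping in the second paragraph: deciding which scalar factors migrate into $\kappa_i$, which are packaged into the constant $\tau$, and which are absorbed into $\Xi_{m,i}$, all while staying within the same tolerance as the diagonal approximation of $\boldsymbol{\Sigma}_m$. Once the diagonal-plus-rank-one form is established, the final step is a one-line application of Sherman--Morrison.
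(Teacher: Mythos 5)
Your proposal is correct and follows essentially the same route as the paper: write $\hat{\boldsymbol{\Sigma}}_m$ as $\mathrm{diag}(\kappa_1,\dots,\kappa_K)$ plus the rank-one term $\tau\,\boldsymbol{\Xi}_m\boldsymbol{\Xi}_m^H$ (the paper extracts exactly this structure from $\frac{1}{N_t}\boldsymbol{\mu}_m\boldsymbol{\mu}_m^H$, relying as you do on the rank-one $\bar{\mathbf{G}}_m$ and the vanishing of the steering-vector cross terms) and then apply Sherman--Morrison to read off the $(k,k)$ entry. Your bookkeeping with a single factor $\tau$ in the numerator and denominator is the consistent one; the paper's intermediate Sherman--Morrison display carries a $\tau^{2}$ that is evidently a typo given the final stated formula.
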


\begin{proof}
  $\boldsymbol{\mu}_{m}\boldsymbol{\mu}_{m}^{H}$ can be rewritten as 
  \begin{equation}
  \begin{aligned}
    \boldsymbol{\mu}_{m}\boldsymbol{\mu}_{m}^{H} = &\frac{K_{1}}{K_{1}+1} \text{diag}\left(\beta_{1,1},\cdots,\beta_{1,K}  \right)   \\
    & + \frac{K_{0}K_{2}}{\left(K_{0} + 1 \right)\left(K_{2}+1 \right)} \boldsymbol{\Xi}_{m}\boldsymbol{\Xi}_{m}^{H},
  \end{aligned}
  \end{equation}
  where 
  \begin{equation}
    \boldsymbol{\Xi}_{m}=\left[\begin{array}{c}
      \omega_{1}\sqrt{\beta_{0}\beta_{2,1}}\bar{\mathbf{h}}_{1,m}^{H}\mathbf{\Phi}\mathbf{a}^{H}\left( \theta_{0,m}^{e}, \theta_{0,m}^{a}\right) \\
      \cdots \\
      \omega_{K}\sqrt{\beta_{0}\beta_{2,K}}\bar{\mathbf{h}}_{K,m}^{H}\mathbf{\Phi}\mathbf{a}^{H}\left( \theta_{0,m}^{e}, \theta_{0,m}^{a}\right)
      \end{array}\right] \in \mathbb{C}^{K \times 1}.
  \end{equation}
  Then, the $\hat{\boldsymbol{\Sigma}}_{m}$ can be rewritten as $\tilde{\boldsymbol{\Sigma}}_{m} + \frac{K_{0}K_{2}}{N\left(K_{0}+1 \right)\left(K_{2}+1 \right)} \boldsymbol{\Xi}_{m}\boldsymbol{\Xi}_{m}^{H}$, where $\tilde{\boldsymbol{\Sigma}}_{m} = \boldsymbol{\Sigma}_{m} + \frac{K_{1}}{N(K_{1}+1)} \text{diag}$ $\left(\beta_{1,1},\cdots,\beta_{1,K}  \right)$ is also a diagonal matrix. The $i$-th diagonal element of $ \tilde{\boldsymbol{\Sigma}}_{m} $ is $\beta_{1,i}+\frac{\omega_{i}\left(K_{0} + K_{2}+1\right)\beta_{0}\beta_{2,i} }{\left(K_{0}+1 \right)\left(K_{2}+1 \right)}$ $+\frac{K_{1}\beta_{1,i}}{N_{t}\left(K_{1}+1\right)}$, which is denoted by $\kappa_{i}$. By applying Sherman-Morrison formula, we have 
  \begin{align}
    \hat{\boldsymbol{\Sigma}}_{m}^{-1} = \tilde{\boldsymbol{\Sigma}}_{m}^{-1} - \frac{\tau^{2}\tilde{\boldsymbol{\Sigma}}_{m}^{-1} \boldsymbol{\Xi}_{m}\boldsymbol{\Xi}_{m}^{H}\tilde{\boldsymbol{\Sigma}}_{m}^{-1}}{1 + \tau^{2}\boldsymbol{\Xi}_{m}^{H}\tilde{\boldsymbol{\Sigma}}_{m}^{-1}\boldsymbol{\Xi}_{m}},
  \end{align}
  where $\tau = \frac{K_{0}K_{2}}{N_{t}\left(K_{0}+1 \right)\left(K_{2}+1 \right)}$. The $\left(k,k\right)$-th element of $\hat{\boldsymbol{\Sigma}}_{m}^{-1}$ is obtained as 
  \begin{equation}
  \begin{aligned} \label{sigma_kk}
    \left[\hat{\boldsymbol{\Sigma}}_{m}^{-1}\right]_{k,k}  &= \left[\tilde{\boldsymbol{\Sigma}}_{m}^{-1}\right]_{k,k} - \frac{\tau |\Xi_{m,k}|^{2}\left[\tilde{\boldsymbol{\Sigma}}_{m}^{-1}\right]_{k,k}^{2}}{1 + \tau \sum_{i=1}^{K}|\Xi_{m,i}|^{2}\left[\tilde{\boldsymbol{\Sigma}}_{m}^{-1}\right]_{i,i}}   \\
    &= \frac{1}{\kappa_{k}} - \frac{\tau \frac{|\Xi_{m,k}|^{2}}{\kappa_{k}^{2}}}{1 + \tau \sum_{i=1}^{K}\frac{|\Xi_{m,i}|^{2}}{\kappa_{i}}}.
    \end{aligned}
  \end{equation}
\end{proof}

Substituting (\ref{sigma_kk}) into (\ref{bar_rkm}), we have the average rate $\bar{R}_{k,m}^{\text{AO}}$ of user $k$ at $m$-th subcarrier, which can be expressed as (\ref{appro_rate}).
\begin{figure*}[hb] 
  \hrulefill
  \centering
\begin{equation} \label{appro_rate}
  \bar{R}_{k,m}^{\text{AO}}=\log _{2}\left(1+\frac{\bar{p}_{k,m}\exp \left(\psi\left(N_{t}-K+1\right)\right)\kappa_{k}}{\sigma^{2}}  \left( 1 + \frac{\tau \frac{|\Xi_{m,k}|^{2}}{\kappa_{k}}}{1+ \tau \sum_{i \neq k} \frac{ |\Xi_{m,i}|^{2} }{\kappa_{i}}}  \right)\right).
\end{equation}
\end{figure*}

\begin{corollary}
  The lower bound of the $\bar{R}_{k,m}^{\text{AO}}$ is obtained by 
  \begin{align} \label{lb_rkm}
    \bar{R}_{k,m}^{\text{lb}} = \log _{2}\left(1+\frac{\bar{p}_{k,m}\exp \left(\psi\left(N_{t}-K+1\right)\right)\kappa_{k}}{\sigma^{2}} \right).
  \end{align}
\end{corollary}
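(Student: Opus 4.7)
The plan is to obtain $\bar{R}_{k,m}^{\text{lb}}$ by dropping a provably non-negative additive term inside the logarithm of $\bar{R}_{k,m}^{\text{AO}}$ and then invoking the monotonicity of $\log_{2}(\cdot)$. Concretely, inspecting (\ref{appro_rate}), the argument of the logarithm has the form
\begin{equation*}
1 + \frac{\bar{p}_{k,m}\exp\!\left(\psi(N_{t}-K+1)\right)\kappa_{k}}{\sigma^{2}}\left(1 + \Delta_{k,m}\right),
\end{equation*}
where I set $\Delta_{k,m} \triangleq \dfrac{\tau\,|\Xi_{m,k}|^{2}/\kappa_{k}}{1 + \tau\sum_{i \neq k}|\Xi_{m,i}|^{2}/\kappa_{i}}$. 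The target expression (\ref{lb_rkm}) is obtained exactly by replacing $(1+\Delta_{k,m})$ with $1$, so it suffices to show $\Delta_{k,m}\geq 0$.

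To establish $\Delta_{k,m}\geq 0$, I would check each factor in turn, using only the definitions already appearing in Lemma~\ref{lemma2}. First, $\tau = \tfrac{K_{0}K_{2}}{N_{t}(K_{0}+1)(K_{2}+1)}\geq 0$ since the Rician factors $K_{0},K_{2}$ are non-negative and $N_{t}\geq 1$. Second, $|\Xi_{m,i}|^{2}\geq 0$ trivially for every $i\in\mathcal{K}$. Third, $\kappa_{i}>0$ since it is a sum of strictly positive terms (the path loss $\beta_{1,i}$ is positive, and the remaining two summands are non-negative). Consequently the denominator $1 + \tau\sum_{i\neq k}|\Xi_{m,i}|^{2}/\kappa_{i}\geq 1>0$ and the numerator $\tau|\Xi_{m,k}|^{2}/\kappa_{k}\geq 0$, which gives $\Delta_{k,m}\geq 0$.

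Finally, because $\log_{2}(\cdot)$ is strictly increasing and $\kappa_{k}>0$, replacing $(1+\Delta_{k,m})$ by $1$ inside the logarithm yields a quantity no larger than $\bar{R}_{k,m}^{\text{AO}}$, which is exactly the expression $\bar{R}_{k,m}^{\text{lb}}$ in (\ref{lb_rkm}). There is essentially no obstacle here: the corollary is a one-line monotonicity argument, and the only point worth stating explicitly is the non-negativity of each constituent term, which I would make rigorous by a brief enumeration as above. Equality holds in the degenerate cases $K_{0}=0$, $K_{2}=0$, or $\omega_{k}=0$, all of which cause $\Delta_{k,m}=0$; it is worth noting this as a tightness remark, since it indicates that the lower bound is loosest precisely when the BS--RIS--user cascade contributes meaningful LoS energy (i.e.\ exactly the regime in which RIS deployment matters), justifying its later use as a tractable surrogate objective for optimizing $\mathbf{p}$ and $\varphi_{R}$.
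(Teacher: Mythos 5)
Your proposal is correct and follows the same route as the paper: the paper's proof is the one-line observation that $|\Xi_{m,i}|^{2}\geq 0$ lets one drop the extra term inside the logarithm of (\ref{appro_rate}), and your argument is simply a more carefully enumerated version of that same monotonicity step. The added tightness remark is accurate but not needed for the claim.
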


\begin{proof}
  Since $ |\Xi_{m,i}|^{2} \geq 0$, the lower bound is obtained.
\end{proof}
In this paper, we adopt an average power allocation strategy, wherein the transmitted power is equally distributed among all users on each subcarrier, i.e., $p_{i,j} = p_{\hat{i},\hat{j}}$ for $ \left( i,\hat{i} \right)\in \mathcal{K}$ and $ \left( j,\hat{j} \right) \in \mathcal{M}$, then different subcarrier has same lower bound rate for the same user, i.e., $\bar{R}_{k,i}^{\text{lb}} = \bar{R}_{k,j}^{\text{lb}}$ for $\left( i,j \right)\in \mathcal{M}$. The lower bound of sum-rate can be expressed as 
\begin{align} \label{lb_rkm_2}
  \bar{R}^{\text{lb}} = M \sum_{k=1}^{K} \tilde{R}_{k}^{\text{lb}},
\end{align}
where $\tilde{R}_{k}^{\text{lb}}$ denotes the lower bound rate at each subcarrier for user $k$.

\begin{corollary}
  Without the RIS, the average rate is 
\begin{equation}
  \begin{aligned} \label{without_ris_rate}
    &\bar{R}_{k,m}^{\text{w/o ris}} =  \\
    &\log _{2}\left(1+\frac{\bar{p}_{k,m}  \psi\left(N_{t}-K+1\right) }{\sigma^{2}}  \frac{\left(N_{t}(K_{1}+1)\beta_{1,k} + K_{1}\beta_{1,k} \right)}{N_{t}(K_{1}+1)}  \right).
  \end{aligned}
\end{equation} 
\end{corollary}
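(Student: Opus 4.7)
The plan is to derive the without-RIS rate as a direct specialization of the results already established in Lemma~\ref{lemma2} and equation~(\ref{appro_rate}). Removing the RIS from the system is equivalent to setting $\omega_i = 0$ for every user $i \in \mathcal{K}$ in the effective channel model, since $\omega_i$ was introduced precisely as the indicator that user $i$ is served via the RIS path. I would make this substitution throughout the previously derived expressions and track which terms survive.

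First, I would revisit the definition $\Xi_{m,i} = \omega_{i}\sqrt{\beta_{0}\beta_{2,i}}\,\bar{\mathbf{h}}_{i,m}^{H}\mathbf{\Phi}\,\mathbf{a}^{H}(\theta_{0,m}^{e},\theta_{0,m}^{a})$ from Lemma~\ref{lemma2} and observe that $\omega_i = 0$ forces $\Xi_{m,i} = 0$ for all $i$ and all subcarriers $m$. Consequently, the entire correction term
\begin{equation*}
\frac{\tau\,|\Xi_{m,k}|^{2}/\kappa_{k}}{1 + \tau \sum_{i \neq k} |\Xi_{m,i}|^{2}/\kappa_{i}}
\end{equation*}
appearing inside the logarithm of~(\ref{appro_rate}) collapses to zero, so the average rate reduces to the lower bound form $\bar{R}_{k,m}^{\text{lb}}$ given in~(\ref{lb_rkm}).

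Next, I would simplify $\kappa_k$ under the same substitution. From its definition, $\kappa_{k} = \beta_{1,k} + \frac{\omega_{k}(K_{0}+K_{2}+1)\beta_{0}\beta_{2,k}}{(K_{0}+1)(K_{2}+1)} + \frac{K_{1}\beta_{1,k}}{N_{t}(K_{1}+1)}$, so with $\omega_k = 0$ the middle term vanishes and
\begin{equation*}
\kappa_k \;=\; \beta_{1,k} + \frac{K_{1}\beta_{1,k}}{N_{t}(K_{1}+1)} \;=\; \frac{N_{t}(K_{1}+1)\beta_{1,k} + K_{1}\beta_{1,k}}{N_{t}(K_{1}+1)}.
\end{equation*}
Substituting this simplified $\kappa_k$ into~(\ref{lb_rkm}) yields exactly the claimed expression for $\bar{R}_{k,m}^{\text{w/o ris}}$.

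There is no real obstacle here: the corollary is essentially a sanity-check specialization, and the only subtlety is conceptual, namely justifying that the Wishart approximation machinery of Section~\ref{sec_appro_rate} continues to apply when the RIS path is removed. This is immediate because the direct BS-user link $\mathbf{d}_{k,m}$ is itself Rician and the non-central Wishart-to-central Wishart approximation of Steyn and Roux (cited before Lemma~\ref{lemma2}) still applies with mean matrix $[\boldsymbol{\mu}_m]_{i,:} = \sqrt{\beta_{1,i}K_1/(K_1+1)}\,\bar{\mathbf{d}}_i^H$ and covariance $[\boldsymbol{\Sigma}_m]_{i,j} = \beta_{1,i}\delta_{ij}$, which is exactly the $\omega_i = 0$ specialization of the quantities used before. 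Hence the derivation requires no new analytical content beyond the substitution described.
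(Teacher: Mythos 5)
Your proof is correct and follows exactly the paper's own argument, which simply substitutes $\omega_{k}=0$ into (\ref{appro_rate}); your version just spells out the intermediate steps (vanishing of $\Xi_{m,i}$ and simplification of $\kappa_k$) in more detail. Note that your derivation actually yields $\exp\left(\psi\left(N_{t}-K+1\right)\right)$ as in (\ref{lb_rkm}), so the bare $\psi\left(N_{t}-K+1\right)$ in the corollary's displayed formula appears to be a typo in the paper rather than a flaw in your reasoning.
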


\begin{proof}
  Substituting $\omega_{k} = 0$ into (\ref{appro_rate}), we have (\ref{without_ris_rate}).
\end{proof}

\section{RIS Deployment Optimization}
In this section, we firstly propose a heuristic method by exploiting the lower bound of the average user rate $\bar{R}_{k,m}^{\text{AO}}$, and then analyze the computational complexity and convergence of the proposed algorithm. Finally, the optimization of RIS' phase shifters is also presented.

\subsection{The Proposed Heuristic Method} \label{opt-heu}

In this subsection, we propose a heuristic method to effectively find the optimal location of the RIS.
Based on the lower bound rate (\ref{lb_rkm}) and (\ref{lb_rkm_2}), the problem becomes
\addtocounter{equation}{0}
\begin{IEEEeqnarray}{lcll} 
  \mathcal{P} 3:  &\max _{ \mathbf{p},\varphi_{R}} \quad & \mathbb{E}_{\mathbf{Q}} \left[ \sum_{k=1}^{K} \tilde{R}_{k}^{\text{lb}}\left( \mathbf{p},\varphi_{R},\mathbf{q}_{k}  \right)\right] \label{p3} \\
  &\text{s.t.}  & \mathbf{p} \in \mathcal{S},     \IEEEyessubnumber\label{p3_constraint1}\\
  &             & \varphi_{R} \in [ 0,2\pi).    \IEEEyessubnumber\label{P3_constraint2}
\end{IEEEeqnarray}
We adopt the sample average approximation (SAA) method \cite{Kim_Guide_2015} to deal with the expectation operator, then the objective function becomes $\frac{1}{T}\sum_{t=1}^{T}\sum_{k=1}^{K} \tilde{R}_{k}^{\text{lb}}\left( \mathbf{p},\varphi_{R},\mathbf{q}_{k,t}  \right)$, where $T$ denotes the sample size. Since both index $t$ and $k$ denote different users, by omitting the index $k$, the objective function can further expressed as 
\begin{equation}
\begin{aligned}
  &\frac{1}{T}\sum_{t=1}^{T} \tilde{R}_{t}^{\text{lb}}\left( \mathbf{p}, \varphi_{R},\mathbf{q}_{t}  \right)  \\
   = \quad &\frac{1}{T}\sum_{t=1}^{T} \log _{2}\left(1+\frac{\tilde{p}\exp \left(\psi\left(N_{t}-K+1\right)\right)\kappa_{t}}{\sigma^{2}} \right)   \\
   \le \quad &\frac{1}{T} \log _{2}\left(1+\frac{\tilde{p}\exp \left(\psi\left(N_{t}-K+1\right)\right) \sum_{t=1}^{T} \kappa_{t}}{\sigma^{2}} \right),
\end{aligned} 
\end{equation}
where $\tilde{p} =  \exp \left(\mathbb{E} \left[ \ln \frac{P_{\max}}{KM} \right] \right) $, and $P_{\max}$ is the transmit power at the BS. Then the problem becomes 
\addtocounter{equation}{0}
\begin{IEEEeqnarray}{lcl} 
  \mathcal{P} 3 \mbox{-} 1: \quad \quad &\max _{ \mathbf{p},\varphi_{R}} \quad & \sum_{t=1}^{T} \kappa_{t} \label{p3_1} \\
  &\text{s.t.}  & \mathbf{p} \in \mathcal{S},     \IEEEyessubnumber\label{p3_1_constraint1}\\
  &             &  \varphi_{R} \in [ 0,2\pi ).    \IEEEyessubnumber\label{P3_1_constraint2}
\end{IEEEeqnarray}
In the following, the location of the RIS is found by optimizing $\varphi_R$, $d_0$, $h_0$ and $\varphi_{0}$ respectively.

\textbf{(1) Optimize $\varphi_{R}$}: Due to the expression of $\kappa_{t}$, when the position $\mathbf{p} = [d_0,h_0,\varphi_0]$ is fixed, the problem becomes 
\addtocounter{equation}{0}
\begin{IEEEeqnarray}{lcl} 
  \mathcal{P} 3\mbox{-}2: \quad \quad  &\max _{\varphi_{R}} \quad &\sum_{t=1}^{T}\omega_{t} \beta_{0}\beta_{2,t} \label{p3_2} \\
  &\text{s.t.}  & \varphi_{R} \in [ 0,2\pi ).     \IEEEyessubnumber\label{p3_2_constraint1}
\end{IEEEeqnarray}
Although the orientation of the RIS $\varphi_R$ does not affect the large-scale path loss, it {determines} the number of users served by the RIS, which makes the problem of (\ref{p3_2}) intractable. Here, we heuristically transform the problem into one of maximizing the number of users served by the RIS, which is
\addtocounter{equation}{0}
\begin{IEEEeqnarray}{lcl} 
  \mathcal{P} 3\mbox{-}3: \quad \quad  &\max _{\varphi_{R}} \quad &\sum_{t=1}^{T}\omega_{t} \label{p3_3} \\
  &\text{s.t.}  & \varphi_{R} \in [ 0,2\pi ).     \IEEEyessubnumber\label{p3_3_constraint1}
\end{IEEEeqnarray}
It is means that the design goal of the RIS' orientation $\varphi_R$ is to maximize the number of users served by the RIS. 
For addressing the problem $\mathcal{P} 3\mbox{-}3$, the first step is to define a set of RIS orientation angles, denoted as $\mathcal{N}$, whose size is $N$. Within each RIS orientation angle $\varphi_{R}(i) = 2\pi i/{N} \left(i = 0,1,\dots,N-1\right)$, the SAA method is employed to determine whether a user, based on the geographically generated user location samples, falls within the coverage area of the RIS. The objective function in (\ref{p3_3}) is then computed. Finally, the RIS orientation angle corresponding to the maximum number of served users is selected.

\textbf{(2) Optimize $d_{0}$}: When the other three parameters are fixed, the problem of optimizing $d_{0}$ is 
\addtocounter{equation}{0}
\begin{IEEEeqnarray}{lcl} 
  \mathcal{P} 3\mbox{-}4: \quad \quad  &\max _{d_{0}} \quad &\sum_{t=1}^{T}\omega_{t}\beta_{0}\beta_{2,t} \label{p3_4} \\
  &\text{s.t.}  & d_{0} \in [r_{\min},r_{\max}],     \IEEEyessubnumber\label{p3_4_constraint1}
\end{IEEEeqnarray}
 where $\omega_{t}$ is determined by $\varphi_R$, and $r_{\min}$ and $r_{\max}$ denote the deployment range of the RIS. The large-scale path loss of the BS-RIS-user link  can be expressed as 
\begin{equation}  \label{bru-pathloss}
  \left\{ \begin{array}{ll}
  \beta_{0} &= C_0\left(d_{0}^2 + \left(  h_0 - h_B \right)^2\right)^{-\alpha_{0}/2},  \\
  \beta_{2,t} &= C_0\left((d_{t}^{r})^2 + \left(  h_0 - h_u \right)^2\right)^{-\alpha_{2}/2},  
  \end{array}\right.
\end{equation}
where $C_0$ is the path loss at the reference distance, $\alpha_{0}$ and $\alpha_{2}$ are the path loss exponents for BS-RIS link and RIS-user link, respectively. Although the expression of $d_{t}^{r}$ includes $d_{0}$ based on (\ref{dk_r}), we still regard it as a constant in the process of solving $d_{0}$. Similar as \cite{Pan_Self_2022}, the value of $d_{t}^{r}$ is based on the RIS' location of the previous iteration. Define $\tilde{C}\triangleq C_{0}\sum_{t=1}^{T} w_{t} \beta_{2, t}$, the objective function of (\ref{p3_4}) becomes $y_{1} \triangleq \tilde{C}\left[d_{0}^{2}+\left(h_{0}-h_{B}\right)^{2}\right]^{-\frac{\alpha_{0}}{2}}$. Then, the derivative is 
\begin{align}
  \frac{\partial y_{1}}{\partial d_{0}}=-\alpha_{0} \tilde{C}d_{0}\left[d_{0}^{2}+\left(h_{0}-h_{B}\right)^{2}\right]^{-\frac{\alpha_{0}}{2}-1}. 
\end{align}
It can be observed that $\frac{\partial y_{1}}{\partial d_{0}} <0$. Therefore, we can conclude that 
\begin{align} \label{d0_opt}
  d_{0}^{*} = r_{\min}.
\end{align}
Based on (\ref{d0_opt}), it is surprising {to conclude} that the RIS should be deployed at the closest position to the BS. Moreover, in an end-to-end scenario, the system achieves the highest rate when the RIS is deployed at the position closest to the BS or the user \cite{Wu_Towards_2020,Wu_Intelligent_2019}. In this paper, we consider the distance between RIS and users as a constant, so it is reasonable to get that conclusion of (\ref{d0_opt}).

\textbf{(3) Optimize $h_{0}$}: For optimizing $h_{0}$, we utilize the Dinkelbach’s Transform \cite{Werner_On_1967} to transform the objective function of (\ref{p3_2}) into $\sum_{t=1}^{T} \beta_{0} C_{0} w_{t}-\sum_{t=1}^{T} \left[\left(d_{t}^{r}\right)^{2}+\left(h_{0}-h_{u}\right)^{2}\right]^{\frac{\alpha_{2}}{2}}$, and its lower bound is 
\begin{align}
  \beta_{0} C_{0} \sum_{t=1}^{T} w_{t} -  \left( \sum_{t=1}^{T}\left((d_{t}^{r})^2 + \left(  h_0 - h_u \right)^2\right)\right)^{\alpha_{2}/2}.
\end{align}
By defining $ q_{1} \triangleq C_{0}^{2} \sum_{t=1}^{T} w_{t} > 0 $ and $q_{2} \triangleq \sum_{t=1}^{T}\left(d_{t}^{r}\right)^{2}>0 $, the problem becomes
\addtocounter{equation}{0}
\begin{IEEEeqnarray}{lcl} 
  \mathcal{P} 3\mbox{-}5: \quad \quad  &\max _{ h_{0}} \quad & y_{2} \label{p3_5} \\
  &\text{s.t.}  & h_{0} \in [h_{\min},h_{\max}],     \IEEEyessubnumber\label{p3_5_constraint1}
\end{IEEEeqnarray}
where $y_2 \triangleq q_{1} \left(d_{0}^2 + \left(  h_0 - h_B \right)^2\right)^{-\alpha_{0}/2} -  \left(q_{2} + T  \left(  h_0 - h_u \right)^2\right)^{\alpha_{2}/2}$, $h_{\min}$ and $h_{\max}$ denote the height range of the RIS.
The first-order derivative of $y_{2}$ is 
\begin{equation}
\begin{aligned}
  \frac{\partial y_{2}}{\partial h_{0}}=&-\alpha_{0} q_{1} C_{0}\left(h_{0}-h_{B}\right) \left[d_{0}^{2}+\left(h_{0}-h_{B}\right)^{2}\right]^{-\frac{\alpha_{0}}{2}-1}  \\
   &- \alpha_{2}T\left(h_{0}-h_{u}\right) \left[a_{2}+T\left(h_{0}-h_{u}\right)^{2}\right]^{\frac{\alpha_{2}}{2}-1}.
\end{aligned}
\end{equation}
Based on the fact $h_{B} > h_{u}$, following cases need to be discussed.
\begin{itemize}
  \item When $h_{0} \geq h_{B}$, $\frac{\partial y}{\partial h_{0}} \leq 0$, then the optimal height of the RIS is $h_B$.
  \item When $h_{0}\leq h_{u}$, $\frac{\partial y}{\partial h_{0}}\geq 0$, then the optimal height of the RIS is $h_u$.
  \item When $h_{u}<h_{0}<h_{B}$, we have $\frac{\partial^{2} y_{2}}{\partial d_{0}^2} < 0$ that indicates the objective function increases firstly and then decreases.
\end{itemize}
Based on discussion above, the optimal height is obtained by solving 
\begin{align} \label{h0_opt}
  \frac{\partial y_{2}}{\partial h_{0}}=0.
\end{align}

\textbf{(4) Optimize $\varphi_{0}$}: Similar to optimization of $h_{0}$, based on the Dinkelbach’s Transform \cite{Werner_On_1967}, the objective function is transformed into $\sum_{t=1}^{T} \left[  w_{t} \beta_{0} C_{0} - \left(\left(d_{t}^{r}\right)^{2}+( h_{0} - h_{u} )^{2}\right)^{\alpha_{2} / 2} \right]$. Since the first item of the objective function is a constant, the problem becomes
\addtocounter{equation}{0}
\begin{IEEEeqnarray}{lcl} 
  \mathcal{P} 3\mbox{-}6: \quad \quad  &\min _{ \varphi_{0}} \quad & \sum_{t=1}^{T}\left((d_{t}^{r})^2 + \left(  h_0 - h_u \right)^2\right)^{\alpha_{2}/2} \label{p3_6} \\
  &\text{s.t.}  & \varphi_{0} \in [0,2\pi).     \IEEEyessubnumber\label{p3_6_constraint1}
\end{IEEEeqnarray}
We usually have $(d_{t}^{r})^2 + \left(  h_0 - h_u \right)^2 \gg 1$, then the following inequation is satisfied when $\alpha_{2}/2 > 1$.
\begin{equation}
\begin{aligned}
  & \sum_{t=1}^{T}\left((d_{t}^{r})^2 + \left(  h_0 - h_u \right)^2\right)^{\alpha_{2}/2}  \\
   \le &\left( \sum_{t=1}^{T}\left((d_{t}^{r})^2 + \left(  h_0 - h_u \right)^2\right)\right)^{\alpha_{2}/2}.
\end{aligned}
\end{equation}
The objective function of (\ref{p3_6}) becomes $\min _{ \varphi_{0}} \sum_{t=1}^{T}(d_{t}^{r})^2$, which can be further expressed as $$\min _{ \varphi_{0}} y_{3} \triangleq \sum_{t=1}^{T}-2 d_{0} d_{t}\left[\cos \varphi_{t} \cos \varphi_{0}+\sin \varphi_{t} \sin \varphi_{0}\right].$$
\begin{lemma} \label{lemma3}
  The optimal $\varphi_{0}$ is obtained as 
  \begin{equation} \label{varphi0_opt}
    \varphi_{0}^{*} = 
    \left\{\begin{array}{llll} 
        0,  &y_{3}(0) \leq y_{3}(\pi)  &\text{and} & y_{3}(0) \leq y_{3}(x^{*}),\\
        \pi,  &y_{3}(\pi) \leq y_{3}(0)  &\text{and} & y_{3}(\pi) \leq y_{3}(x^{*}), \\
        x^{*}, &y_{3}(x^{*}) \leq y_{3}(0)  &\text{and} & y_{3}(x^{*}) \leq y_{3}(\pi),
      \end{array}\right.
  \end{equation} 
  where $x^{*}= \arccos \frac{a_{1}}{\sqrt{4 a_{2}^{2}+a_{1}^{2}}}$, and $a_{1} =  \sum_{t=1}^{T}-2 d_{0} d_{t} \cos \varphi_{t}$, $a_{2} = \sum_{t=1}^{T}-2 d_{0} d_{t} \sin \varphi_{t}$.
\end{lemma}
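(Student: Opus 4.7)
The plan is to treat $y_3$ as a smooth function on the circle $\varphi_0 \in [0,2\pi)$ and use elementary calculus. First, I would pull $\cos\varphi_0$ and $\sin\varphi_0$ out of the sum to obtain the compact form $y_3(\varphi_0) = a_1\cos\varphi_0 + a_2\sin\varphi_0$, recognizing this as a sinusoid with amplitude $\sqrt{a_1^2+a_2^2}$ and a unique minimizer on the circle.

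Next, I would compute the first derivative $\frac{dy_3}{d\varphi_0} = -a_1\sin\varphi_0 + a_2\cos\varphi_0$ and solve the stationarity condition $\frac{dy_3}{d\varphi_0}=0$. This gives $\tan\varphi_0 = a_2/a_1$, which has two roots in $[0,2\pi)$ differing by $\pi$. Combining this relation with $\sin^2\varphi_0+\cos^2\varphi_0=1$ yields a closed-form expression for $\cos\varphi_0$ at each stationary point, which on the principal branch of $\arccos$ produces the candidate $x^*$ stated in the lemma.

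Because $y_3$ is smooth and $2\pi$-periodic, its global minimum on $[0,2\pi)$ is attained at a stationary point or at the endpoint identified with $0 \equiv 2\pi$; the three candidates $\{0,\pi,x^*\}$ in the lemma cover this interior critical point together with the natural symmetry points where $\sin\varphi_0=0$, so a direct three-way comparison of $y_3(0)$, $y_3(\pi)$, and $y_3(x^*)$ yields the case analysis in (\ref{varphi0_opt}). The optimal $\varphi_0^*$ is then simply the argument achieving the smallest value among the three.

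The main obstacle I anticipate is rigorously justifying that $\{0,\pi,x^*\}$ is exhaustive: the other root of $\tan\varphi_0=a_2/a_1$ must be shown (via the second-derivative sign) to be a maximum rather than a minimum, and the edge cases $a_1=0$ or $a_2=0$ need separate attention to avoid dividing by zero in the trigonometric manipulations. Once these degenerate scenarios are handled, the remaining argument is a straightforward numerical comparison of three explicit values of $y_3$.
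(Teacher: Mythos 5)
Your overall strategy---write $y_{3}=a_{1}\cos\varphi_{0}+a_{2}\sin\varphi_{0}$, differentiate, and compare a short list of candidates---is the same elementary calculus argument the paper uses, but there is a concrete gap in how you justify the candidate set $\{0,\pi,x^{*}\}$. On the periodic domain $\varphi_{0}\in[0,2\pi)$ the points $0$ and $\pi$ are neither boundary points nor (generically) stationary points of $y_{3}$, and by your own observation the global minimum of a pure sinusoid on the circle is always attained at a stationary point; so your framing makes $0$ and $\pi$ redundant while omitting the \emph{second} root of $\tan\varphi_{0}=a_{2}/a_{1}$, namely the antipodal critical point lying in $(\pi,2\pi)$, which is precisely the minimizer whenever $a_{2}>0$. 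Calling $0$ and $\pi$ ``natural symmetry points where $\sin\varphi_{0}=0$'' does not explain why they must be compared, nor why the lower-half critical point may be discarded. The paper's proof resolves exactly this issue by substituting $x\triangleq\cos\varphi_{0}$ and writing $\sin\varphi_{0}=\sqrt{1-x^{2}}$, which implicitly restricts to $\varphi_{0}\in[0,\pi]$ and turns $x=\pm1$ (i.e.\ $\varphi_{0}=0,\pi$) into genuine endpoints of the interval $[-1,1]$; a sign analysis of $\partial^{2}y_{3}/\partial x^{2}$ then shows that for $a_{2}<0$ the minimum is interior (at $x^{*}$) and for $a_{2}>0$ it sits at an endpoint, and combining the two cases gives the three-way comparison in \eqref{varphi0_opt}. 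Without some equivalent restriction or case split on the sign of $a_{2}$, your three-way comparison is not justified.

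A second, smaller issue: solving $\tan\varphi_{0}=a_{2}/a_{1}$ together with $\sin^{2}\varphi_{0}+\cos^{2}\varphi_{0}=1$ gives $\cos\varphi_{0}=\pm a_{1}/\sqrt{a_{1}^{2}+a_{2}^{2}}$, not $a_{1}/\sqrt{4a_{2}^{2}+a_{1}^{2}}$; you assert that your stationarity computation ``produces the candidate $x^{*}$ stated in the lemma,'' but it does not, and the discrepancy (the factor $4$) should at least be flagged rather than silently absorbed. Your instinct to treat $a_{1}=0$ or $a_{2}=0$ separately is correct and is something the paper itself glosses over, but the main missing idea is the domain restriction that makes $0$ and $\pi$ legitimate competitors in the first place.
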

\begin{proof}
  Please see Appendix B.
\end{proof}

\begin{algorithm}[t]
	\caption{The Proposed heuristic Method} 
	\hspace*{0.02in} {\textbf{Input}:} \\
	The distribution of users; the initial location of the RIS $\mathbf{p}_{0}$ \\
	\hspace*{0.02in} {\bf Output:} \\
	The optimized RIS' location $\mathbf{p}^{{h}}$ and the orientation $\varphi^{h}_{R}$
  \begin{algorithmic}[1]
    \While{no convergence}
      \State update $\varphi_R$ by solving (\ref{p3_3}),
      \State update $d_0$ by (\ref{d0_opt}),
      \State update $h_0$ by solving (\ref{h0_opt}),
      \State update $\varphi_0$ by (\ref{varphi0_opt}).
    \EndWhile
    \State \Return $\mathbf{p}^{{h}} = [d_0^{\text{opt}},h_0^{\text{opt}},\varphi_0^{\text{opt}}]$ and $\varphi^{h}_{R} = \varphi_R^{\text{opt}} $  
	\end{algorithmic}
\end{algorithm}
The whole heuristic method is summarized in \textbf{Algorithm 1}. At each iteration, four position parameters are optimized respectively until the sum-rate converges. 

\subsection{Performance Analysis of the Proposed Algorithm}
In this subsection, we summarize the proposed algorithm and analyze its computational complexity and the convergence. Overall, the algorithm proposed in this paper yields suboptimal solutions, primarily due to the decomposition of the original problem and subsequent transformation or scaling of each subproblem.

\subsubsection{Computational Complexity Analysis}
For the proposed heuristic method, the complexity comes from the four subproblems, which involves the complexity in order of $\mathcal{O}\left(NKT\right)$, $\mathcal{O}\left(1\right)$, $\mathcal{O}\left(KT\right)$ and $\mathcal{O}\left(KT\right)$ respectively. Then, the overall complexity of the heuristic method is  $\mathcal{O}\left( N{K}TI_{0}\right)$, where $I_{0}$ is the number of iterations. The computational complexity of each sub-problem is quite low. It is easy to observer the computational complexity of the heuristic scheme does not increase linearly with the number of user samples except for the case of $\varphi_R$ optimization.

As a comparison, we now analyze the computational complexity of using the SGD method. The SGD algorithm selects a user sample at each iteration and updates the parameters based on its gradient. According to \cite{Chen_Reconfigurable_2022}, the form of gradients for $\varphi_{0}$ and $\varphi_{R}$ are complicated and it is hard to make the algorithm converge, hence the optimal $\varphi_{0}$ and $\varphi_{R}$ are obtained by exhaustive search method. At each iteration of the SGD-based algorithm, the complexity is dominated by the generation of $K$ locations and the gradient computation, which have the complexity in order of  $\mathcal{O}\left(K\right)$ and $\mathcal{O}\left(KM\right)$ respectively. Therefore, the overall complexity of the SGD-based algorithm is $\mathcal{O}\left( N^2TKM\right)$, where $N$ denotes the size of the RIS' orientation and azimuth angel set. 

\subsubsection{Convergence Analysis}
Regarding the convergence analysis of the algorithm, we first assume that the number of users served by the RIS lies within the range $[T_{\min}^{\prime},T_{\max}^{\prime}]$. Then, we demonstrate that, when the number of users served by the RIS is fixed, the value of the objective function exhibits both boundedness and monotonicity during the iterative optimization process. Define the number of users served by the RIS is $T^{\prime}$.
For the boundedness, based on the objective function in (\ref{p3_1}), it can be easily found that
\begin{equation}
  \begin{aligned}
    &\sum_{t=1}^{T}\left( \beta_{1,t}+\frac{\omega_{t}\left(K_{0} + K_{2}+1\right)\beta_{0}\beta_{2,t} }{\left(K_{0}+1 \right)\left(K_{2}+1 \right)}+\frac{K_{1}\beta_{1,t}}{N_{t}\left(K_{1}+1\right)} \right) \\
    =&\left( 1 + \frac{K_{1}}{N_{t}\left(K_{1}+1\right)} \right)\sum_{t=1}^{T}\beta_{1,t} + \frac{\left(K_{0} + K_{2}+1\right)\beta_{0} }{\left(K_{0}+1 \right)\left(K_{2}+1 \right)}\sum_{t=1}^{T^{\prime}}\beta_{2,t}.
  \end{aligned}
\end{equation}
Since $\beta_{1,t}$ and $\beta_{2,t}$ represent path loss and their values are usually less than 1, and $\omega_{t} \in \{ 0,1\}$, we have
\begin{equation}
  \left\{\begin{array}{l} 
    \sum_{t=1}^{T} \beta_{1,t} < T \\
    \sum_{t=1}^{T^{\prime}}\beta_{2,t} <T^{\prime}
  \end{array}\right. ,
\end{equation}
It can be easily inferred that the objective function is bounded and less than $\left( 1 + \frac{K_{1}}{N_{t}\left(K_{1}+1\right)} \right)T + \frac{\left(K_{0} + K_{2}+1\right)\beta_{0} }{\left(K_{0}+1 \right)\left(K_{2}+1 \right)}T^{\prime}$.

For the monotonicity, firstly, the distance $d_{0}$ between RIS and BS remains constant due to the optimization result of (\ref{d0_opt}).
Then, in the process of optimization of $h_{0}$ and $\varphi_{0}$, the objective function's lower bound is targeted, which is transformed into $\sum_{t=1}^{T} \beta_{0} C_{0} w_{t}-\sum_{t=1}^{T} \left[\left(d_{t}^{r}\right)^{2}+\left(h_{0}-h_{u}\right)^{2}\right]^{\frac{\alpha_{2}}{2}}$. In the optimization of $\varphi_{0}$, it can be observed that it is equivalent to $\min { \varphi_{0}} \sum_{t=1}^{T}(d_{t}^{r})^2$. By examining the expression of $y_{3}$, it can be inferred that $d_{0}$, $d_{t}$, and $\varphi_{t}$ remain unchanged during the iterative optimization process, so $\sum_{t=1}^{T}(d_{t}^{r})^2$ can be considered a constant during the iteration. Regarding the optimization of $h_{0}$, it can be observed from the expression of $y_2$ that the parameter $q_1$ keeps non-decreasing and $q_{2}$ remains constant during the iteration. Thus, we have $\frac{\partial y_{2}}{\partial q_{1}} \geq 0$, which indicats the non-decreasing property of $y_2$. 

Based above, we can concluded that the objective function value exhibits a non-decreasing property when $T^{\prime}$ is fixed. 
During the iterative process,  $T^{\prime}$ will have a maximum and minimum value, and the corresponding convergence values are denoted as $y_{c}\left(T_{\min}^{\prime}\right)$ and $y_{c}\left(T_{\max}^{\prime}\right)$. As a result, the final convergence value fluctuates within this range. Moreover, simulation results prove that $T^{\prime}$ is fixed after only one or two iterations, which further ensures the convergence of the proposed algorithm.

\subsection{Optimaization for Phase Shifters of the RIS}
Based on the RIS' location optimized in the \ref{opt-heu}, in this subsection, we will describe how the phase shifters of RIS is dynamically adjusted for a specific users' locations. For the optimization of phase shifters of the RIS, there have been many research works, including optimization techniques for continuous phase shifters \cite{huang2019reconfigurable,pan2020multicell,zhang2021robust,huang2020reconfigurable} and for discrete phase shifters \cite{wu2019beamforming,shao2019framework,dai2021reconfigurable}. 

Since the BS adopts ZF precoder, the problem of optimizing $\boldsymbol{\theta}$ is 
\addtocounter{equation}{0}
\begin{IEEEeqnarray}{lcl} 
  \mathcal{P} 4:  & \max _{\boldsymbol{\theta}} \quad &  g\label{p4} \\
  &\text{s.t.}  & | \theta_{n}| = 1, \forall n \in \mathcal{N}_{r}.     \IEEEyessubnumber\label{p4_constraint1}
\end{IEEEeqnarray}
where $g = \sum_{k=1}^{K} \sum_{m=1}^{M} \log _{2}\left(1+\frac{p_{k, m} \left( {\mathbf{h}}_{k,m}^{\text{eff}} \right)^{H} \mathbf{f}_{k, m}  \mathbf{f}_{k, m}^{H} {\mathbf{h}}_{k,m}^{\text{eff}}}{\sigma^{2}}\right)$. According to the quadratic transform \cite{shen_fractional_2018}, the problem becomes
\addtocounter{equation}{0}
\begin{IEEEeqnarray}{lcl} 
  \mathcal{P} 4\mbox{-}1: \quad \quad  & \max _{\boldsymbol{\theta}} & \quad g \left( \boldsymbol{\theta} \right) \label{p4_1} \\
  &\text{s.t.}  & | \theta_{n}| = 1, \forall n \in \mathcal{N}_{r},     \IEEEyessubnumber\label{p4_1_constraint1}
\end{IEEEeqnarray}
where $$ g \left( \boldsymbol{\theta} \right) = \sum_{k=1}^{K} \sum_{m=1}^{M}\left( 2 \Re\left\{\gamma_{k, m}^{H} \left( {\mathbf{h}}_{k,m}^{\text{eff}} \right)^{H} \mathbf{f}_{k, m}\right\}-\gamma_{k, m}^{H} \frac{\sigma^{2}}{p_{k, m}} \gamma_{k, m} \right),$$ with $\gamma_{k, m}$ denoting the introduced auxiliary variable. The whole optimization framework is to alternately update $\gamma_{k, m}$ and $\boldsymbol{\theta}$ until the objective function is convergent. It should be noted that $\mathbf{f}_{k,m}$ is updated based on the ZF principle after obtaining new $\gamma_{k, m}$ and $\boldsymbol{\theta}$.

For given $\boldsymbol{\theta}$, the optimal $\gamma_{k,m}$ can be obtained by solving $\partial g / \partial \gamma_{k, m}=0$, which is 
\begin{align}
  \gamma_{k, m}^{\text{opt}}=\frac{\left( {\mathbf{h}}_{k,m}^{\text{eff}} \right)^{H} \mathbf{f}_{k, m} p_{k, m}}{\sigma^{2}}.
\end{align}

For optimizing $\boldsymbol{\theta}$ based on the given $\gamma_{k,m}$, the problem is 
\addtocounter{equation}{0}
\begin{IEEEeqnarray}{lcl} 
  \mathcal{P} 4\mbox{-}2: \quad \quad  & \max _{\boldsymbol{\theta}} & \sum_{k=1}^{K} \sum_{m=1}^{M}\Re\left\{\gamma_{k, m}^{H} \left( {\mathbf{h}}_{k,m}^{\text{eff}} \right)^{H} \mathbf{f}_{k, m}\right\} \label{p4_2} \\
  &\text{s.t.}  & | \theta_{n}| = 1, \forall n \in \mathcal{N}_{r}.     \IEEEyessubnumber\label{p4_2_constraint1}
\end{IEEEeqnarray}
Define $a_{k,m} \triangleq \mathbf{d}_{k,m}^{H}  \mathbf{f}_{k, m}$ and $\mathbf{v}_{k, m} \triangleq \operatorname{diag}\left(w_{k}  \mathbf{h}_{k,m}^{H}\right) \mathbf{G}_{m}^{H} \mathbf{f}_{k, m}$, we have $\left( {\mathbf{h}}_{k,m}^{\text{eff}} \right)^{H} \mathbf{f}_{k, m} = a_{k, m}+\boldsymbol{\theta}^{H} \mathbf{v}_{k, m}$. Then, the objective function of (\ref{p4_2}) becomes $\sum_{k=1}^{K} \sum_{m=1}^{M} \Re\left\{\boldsymbol{\theta}^{H} \tilde{\mathbf{v}}_{k, m}\right\}$, where $\tilde{\mathbf{v}}_{k, m} = \gamma_{k, m}^{H}\mathbf{v}_{k, m}$. The optimal phase shifters of the RIS is obtained by 
\begin{align} \label{f2}
  \theta_{n}^{\text{opt}}=\frac{\boldsymbol{\nu}_{n}}{\text{abs}\left(\boldsymbol{\nu}_{n}\right)}, \quad \forall n \in \mathcal{N}_{r},
\end{align}
where $\boldsymbol{\nu} = \sum_{k=1}^{K} \sum_{m=1}^{M}\tilde{\mathbf{v}}_{k, m}$, and $\text{abs}\left( \cdot\right)$ returns the absolute value.

In practical scenario, RIS is basically non-ideal due to the limited hardware implementation of metamaterials, which implies phase shifters of the RIS can not be adjusted continuously. Therefore, in this context, we consider the adjustable phase of RIS as discrete, i.e.,
\begin{align}
  \mathcal{F} \triangleq\left\{\theta_{n} \mid \theta_{n} \in\left\{1, e^{j \frac{2 \pi}{L}}, \cdots, e^{j \frac{2 \pi(L-1)}{L}}\right\}\right\},
\end{align}
where $L$ represents the resolution. In this case, the common solution is to find the discrete points closest to the optimal solution \cite{wu2019beamforming}, i.e.,
\begin{align} \label{f3}
  \angle \theta_{n}^{\text {sub }}=\underset{\phi \in \mathcal{F}}{\operatorname{argmin}}\left|\angle \theta_{n}^{\text {opt }}-\angle \phi\right|, \quad  \forall n \in \mathcal{N}_{r},
\end{align}
where $\theta_{n}^{\text {opt }}$ is the optimal solution, can be obtained by (\ref{f2}).

\section{Simulation Results}

In this section, we evaluate the performance of the proposed algorithm by running numerical simulations in the wideband RIS-assisted mmWave MIMO systems. The default system parameters are shown in Table \ref{table1}.

\begin{table}[!hpt]
  \caption{Simulation parameters}
  \label{table1}
  \centering
  \begin{tabular}{lr}  \toprule
    \textbf{Simulation Parameters}      & \textbf{Value}  \\ \midrule
    The number of antennas at BS$N_{t}$ & 128            \\ 
    The number of RIS elements $N_{r}$  & 100           \\
    The number of subcarriers $M$       & 16             \\
    The number of users $K$             & 4             \\
    The frequency of carrier $f_c$      & 28GHz          \\
    The bandwidth $B$                   & 4GHz          \\
    The height of the BS $h_{B}$        & 10m             \\
    The height of users $h_u$           & 1.5m          \\
    The radius of the cell $r$          & 200m           \\ 
    The transmit power at the BS $P_{\max}$   & 30 dBm        \\
    The noise power  $\sigma^2$    & -104dBm         \\ \bottomrule
  \end{tabular}
\end{table}

For the channel paremeters, the path loss of the BS-user link is $\beta_{1, k}=C_{1} d_{k}^{-\alpha_{1}}$, where $C_{1} = \frac{\lambda^{2}}{16 \pi^{2}} $ is based on \cite{tang_path_2022}, $\alpha_{1} = 4$, and $\lambda$ is the wave length. Then the path loss of the BS-RIS link and RIS-user link are shown in (\ref{bru-pathloss}). The Rician factors of the BS-RIS link, BS-user link, and RIS-user link are set to $K_{0} = 15$, $K_{1} = 10$ and $K_{2} = 15$. Moreover, in this paper, we assume that the BS is located in the center of the cell, and consider three geographical distributions of users to illustrate the advantage of the propsoed scheme, which are 
\begin{itemize}
  \item \textbf{Uniform Distribution}: Users are uniform distributed in a circular cell with radius $r$.
  \item \textbf{One Hotspot}: The cell {contains} one hotspot region, and users are randomly distributed in a circle centered at polar coordinate of $(50, \pi/4)$ with radius $10$ m.
  \item \textbf{Multiple Hotspots}: The cell {contains} four hotspot regions whose center polar coordinates are $(50, \frac{\pi}{4})$, $(100, \frac{3\pi}{4})$, $(50, -\frac{3\pi}{4})$, $(100, -\frac{\pi}{4})$, respectively. Users are randomly distributed in these centers with radius $10$ m.
\end{itemize}

Fig. \ref{fig-approximation_vs_power} illustrates the sum-rate against the transmit power at the BS, and compares the sum-rate of simulation, approximation and lower bound results. In each channel realization, the location of the RIS is randomly selected within the cell range. The figure indicates that the average user rate is quite close to the simulation rate in all three distributions, verifying the derivation in Section-III. However, the level of tightness is different. In \textit{one hotspot} distribution, the gap between simulation and approxiamtion results is smallest. This can be explained by the fact that $\boldsymbol{\Sigma}_{m}$ is more like a diagonal matrix in \textit{one hotspot} distribution {than in the other} two distributions.
\begin{figure}[hbtp]
  \centering
  \includegraphics[width = 80mm]{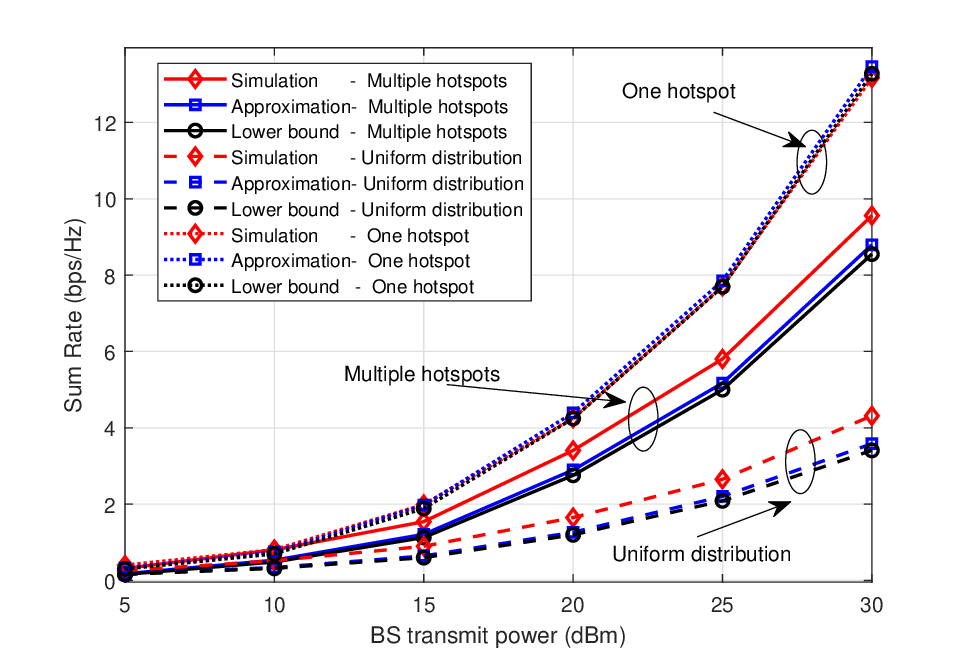}
  \caption{Sum-rate versus the transmit power at BS.}
  \label{fig-approximation_vs_power}
\end{figure}

We use the following deployment strategies for performance comparison, which are
\begin{itemize}
  \item \textbf{Exhaustive search}: Exhaustive search for the optimal RIS position by varying its four position parameters.
  \item \textbf{SGD-based algorithm}: Based on \cite{Chen_Reconfigurable_2022}, using the SGD method to obtain the RIS' location parameters. Specifically, in each iteration, a user location sample is generated, and the gradient of the RIS' location parameters is used for updating.
  \item \textbf{Heuristic}: The proposed algorithm in this paper.
  \item \textbf{Heuristic - one sample}: In order to demonstrate the necessity of considering the long-term geographic distribution of users, we added a comparison curve using a heuristic algorithm but with only one sample of uers locations.
  \item \textbf{Random}: The position parameters of the RIS are randomly selected.
\end{itemize}

Now, we want to evaluate the performance of the proposed RIS deployment strategies. Firstly, the optimized results of the RIS' location by exhaustive search, SGD-based and heuristic algorithms in three user distributions are presented in Table \ref{table2}. Then, Fig. \ref{fig-deployment_power}  compares the performance of different deployment strategies under different user geographical distributions. 
When the transmitted power is $25$ dBm, the sum-rate of the heuristic method is $7.7$ bps/Hz, $13.77$ bps/Hz, and $5.06$ bps/Hz higher than the random deployment strategy in \textit{multiple hotspots}, \textit{one hotspot}, and \textit{uniform distribution}, respectively. 
In percentage terms, the performance of proposed algorithm is 142.01\%, 182.65\%, and 215.81\% higher than that of random deployment strategy, and 10.89\%, 4.23\%, and 6.63\% lower than the performance of the exhaustive search approach.
From the figures, it can be seen that the performance of the proposed heuristic method is close to the exhaustive search scheme in all three distributions, and it is better than the SGD-based algorithm. Besides, the performance gap between the heuristic method and the SGD-based algorithm is clear in \textit{one hotspot} distribution, while the gap becomes quite small in \textit{uniform distribution}. The performance gap between heuristic and heuristic-one sample method is obvious in the \textit{one hotspot} distribution. This is because in the \textit{one hotspot} distribution, the RIS with its location optimizing by the heuristic method can serve all users in the area, however, the method with using only one user sample does not consider the situation of users in other locations, which may result in the RIS not covering some users in certain locations and causing significant performance loss. Moreover, the system achieves highest sum-rate in \textit{one hotspot} distribution, which indicates the RIS is better to serve users in \textit{one hotspot}.

\begin{table*}[!hpt]
  \caption{Optimized results of the RIS' location}
  \label{table2}
  \centering
  \resizebox{\textwidth}{10mm}{
  \begin{tabular}{llll}  \toprule
    \textbf{Locations}      & \textbf{The exhaustive search}  &   \textbf{The SGD-based algorithm}            & \textbf{The heuristic algorithm}\\ \midrule
    \textbf{Multiple hotshots distribution} & $\mathbf{p}=\left[10,2.50,7\right],\varphi_R=0.76$ & $\mathbf{p}=\left[10,1,1\right],\varphi_R = -0.74$ & $\mathbf{p} =\left[10,1.13,10\right],\varphi_R=0.80$ \\
    \textbf{One hotshot distribution}      & $\mathbf{p} =\left[10,1.00,5\right], \varphi_R=0.96$ & $\mathbf{p}=\left[10,1.2,1\right],\varphi_R = 1.06$ & $\mathbf{p} =\left[10,1.10,10\right],\varphi_R = 0.89$ \\
    \textbf{Uniform distribution}      & $\mathbf{p}=\left[10,3.10,5\right],\varphi_R = 2.66$ & $\mathbf{p} =\left[10,0.5,1\right],\varphi_R = -0.14$ & $\mathbf{p} =\left[10,2.03,10\right],\varphi_R = 0.79$ \\
   \bottomrule
  \end{tabular}
  }
\end{table*}

\begin{figure}
  \centering
  \subfigure[Multiple hotspots]{\label{fig-deployment_power_MH}
  \includegraphics[width=80mm]{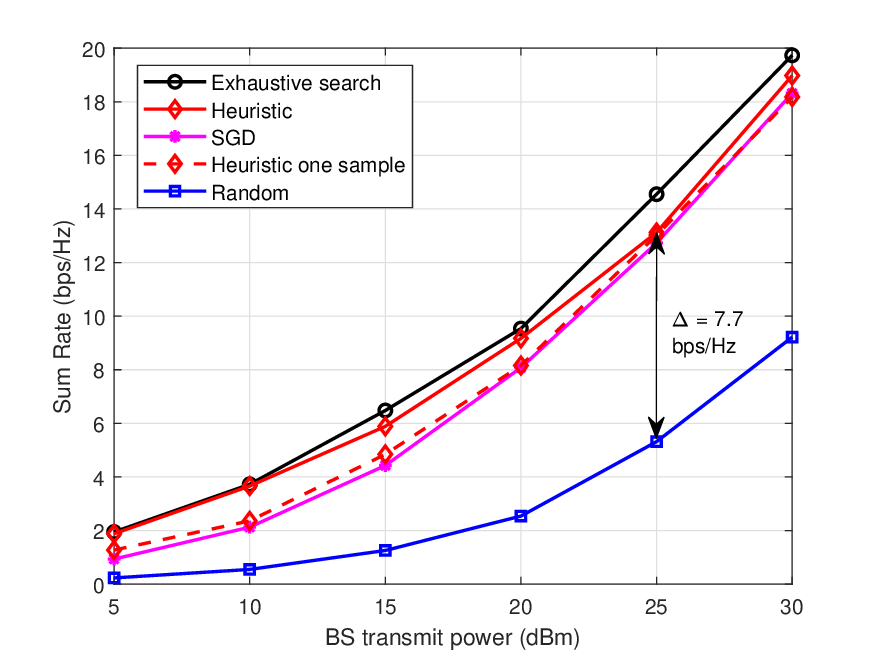}}
  \vspace{0.01\linewidth}
  \subfigure[One hotspot]{\label{fig-deployment_power_OH}
  \includegraphics[width=80mm]{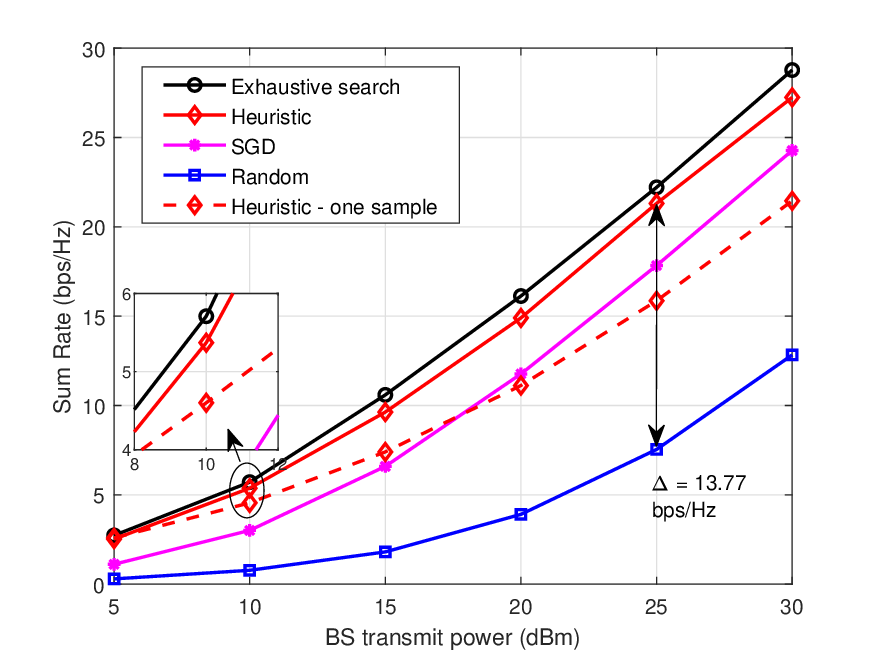}}
  \vspace{0.01\linewidth}
  \subfigure[Uniform distribution]{\label{fig-deployment_power_UD}
  \includegraphics[width=80mm]{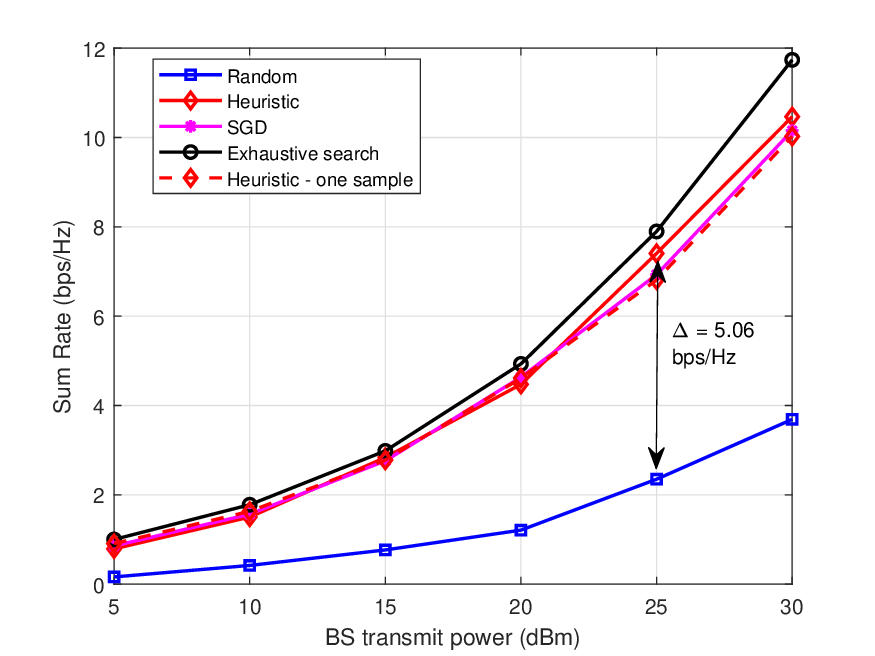}}
  \caption{Different deployment strategies comparison, sum-rate versus transmit power.}
  \label{fig-deployment_power}
\end{figure}

Fig. \ref{fig-deployment_power_Nr} illustrates the sum-rate against the number of RIS' phase shifters for different deployment strategies in all three distributions. 
Compared with the random deployment strategy, the sum-rate gain of the heuristic method is $4.17$ bps/Hz,  $6.82$ bps/Hz, and  $2.01$ bps/Hz in \textit{multiple hotspots}, \textit{one hotspot}, and \textit{uniform distribution}, respectively, when $N_r = 196$. The performance loss is $0.43$ bps/Hz, $0.77$ bps/Hz, and  $0.31$ bps/Hz when compared with the exhaustive search scheme. For the random deployment of the RIS, the performance improvement is limited with the increase of the number of RIS' phase shifters. Moreover, the performance gap between exhaustive search scheme and the heuristic method or between the heuristic method and random deployment strategy becomes larger as the number of the RIS' phase shifters increase. However, in the \textit{uniform distribution}, the performance of the heuristic, SGD-based and heuristic-one sample methods is similar. This is because in the \textit{uniform distribution}, the number of users that RIS can cover in each direction is roughly the same, and for the decisive parameter $d_{0}$, these three methods can also reach the same conclusion, which is, $d_{0} = r_{\min}$.
\begin{figure}
  \centering
  \subfigure[Multiple hotspots]{\label{fig-deployment_power_Nr_MH}
  \includegraphics[width=80mm]{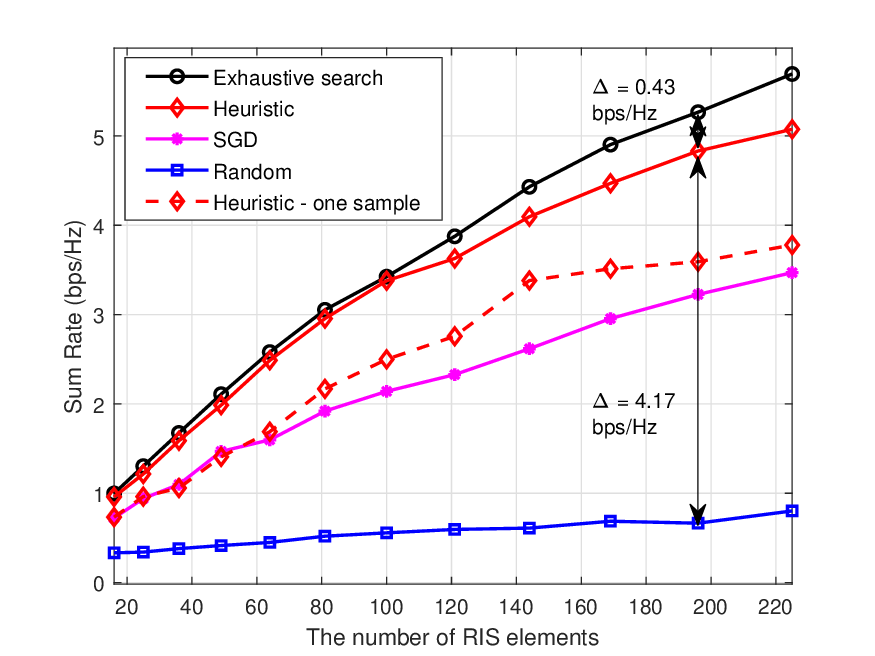}}
  \vspace{0.01\linewidth}
  \subfigure[One hotspot]{\label{fig-deployment_power_Nr_OH}
  \includegraphics[width=80mm]{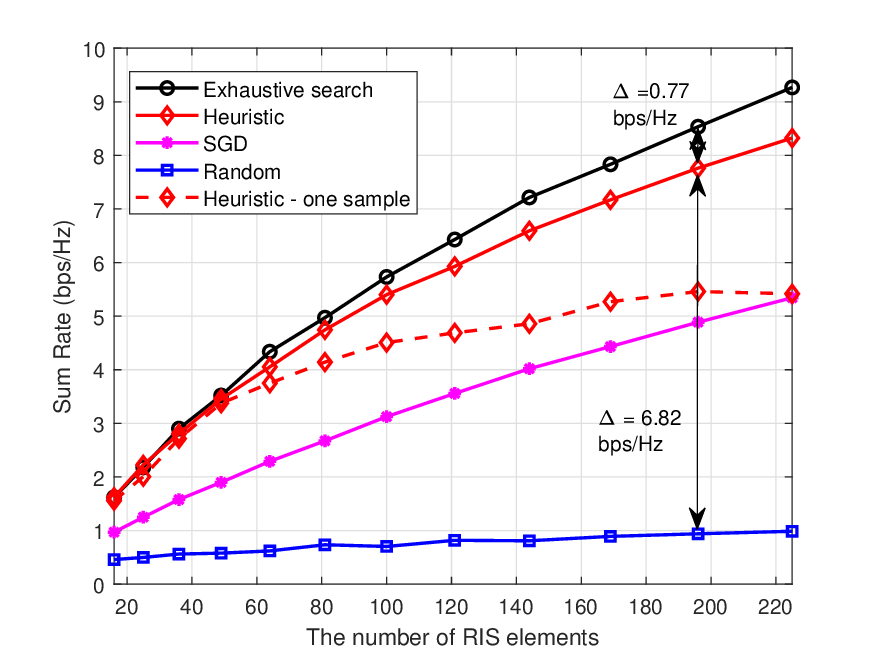}}
  \vspace{0.01\linewidth}
  \subfigure[Uniform distribution]{\label{fig-deployment_power_Nr_UD}
  \includegraphics[width=80mm]{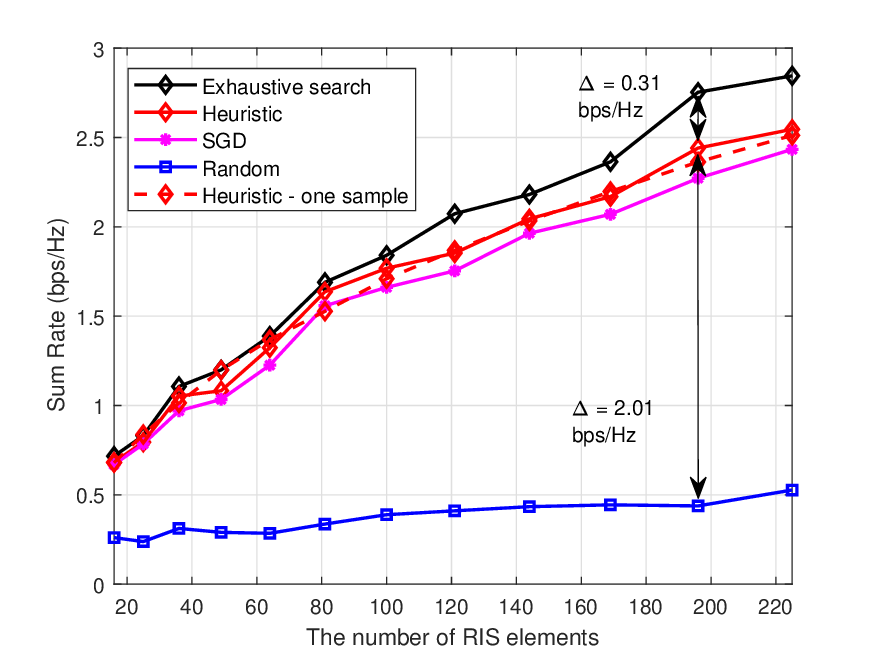}}
  \caption{Different deployment strategies comparison, sum-rate versus $N_{r}$.}
  \label{fig-deployment_power_Nr}
\end{figure}

Fig. \ref{fig-other-para} illustrates the impact of other system parameters on performance. 
In Fig. \ref{fig-Nt}, it can be seen that as the number of antennas increases, the sum-rate exhibit a growth trend, which is consistent with the variation pattern in traditional MIMO wireless systems. In Fig. \ref{fig-users}, in order to further demonstrate the impact of the number of users on performance, we set the number of antennas at the BS to 64, and other parameters remain the same as in Fig. \ref{fig-deployment_power}. Since the transmitted power at the BS is limited and the power allocation strategy is fixed, it can be observed that in the \textit{multiple hotspots} distribution, the performance of all methods shows a trend of first increasing and then decreasing, which is consistent with the trend in traditional wireless systems. In scenarios with a small number of users, the BS can provide adequate streams, and it equips a sufficient number of antennas to eliminate multi-user interference. Consequently, system performance increases with an increasing number of users. However, as the number of users  gradually grows, the BS may not completely eliminate interference, and inter-user interference becomes the primary factor limiting system performance. At this point, system performance decreases with an increasing number of users. In addition, the best point of users' number of the heuristic curve is larger than that of SGD, indicating that the correct deployment of RIS can accommodate more users and achieve higher sum-rate. In Fig. \ref{fig-mmse}, the performance comparison of the BS using ZF and minimum mean square error (MMSE) precoder is shown for the \textit{one hotspot} distribution. The RIS deployment optimization for MMSE precoder is provided in Appendix \ref{mmse-solution}. From the picture, it can be observed that the performance of MMSE precoder is slightly superior to that of ZF precoder for all algorithms. This is attributed to MMSE precoder's consideration of noise impact, leading to performance improvement due to the precoder design rather than performance gains from optimizing the location of RIS.
\begin{figure}
  \centering
  \subfigure[Sum rate versus the number of antennas at the BS]{\label{fig-Nt}
  \includegraphics[width=80mm]{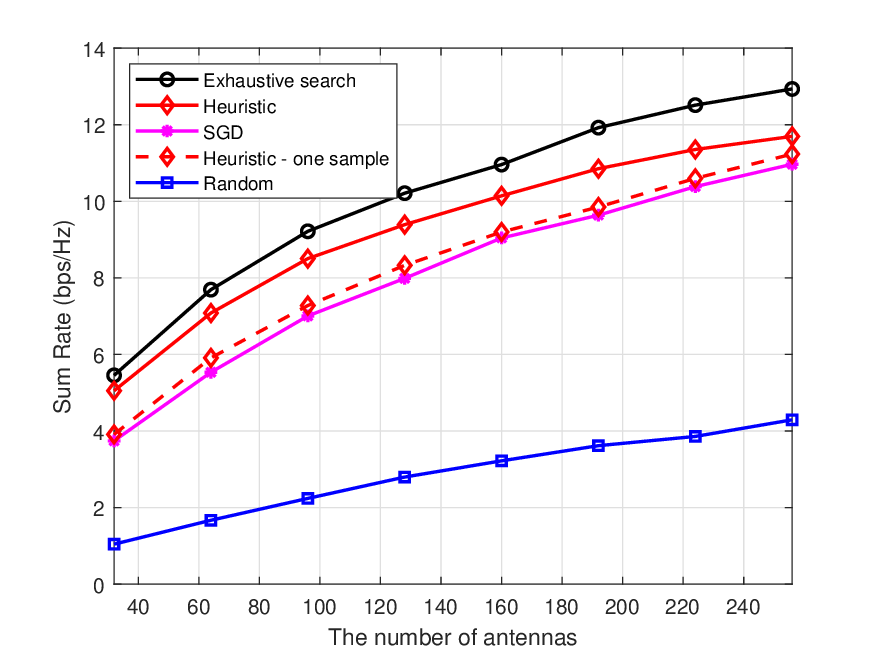}}
  \vspace{0.01\linewidth}
  \subfigure[Sum-rate versus the number of users]{\label{fig-users}
  \includegraphics[width=80mm]{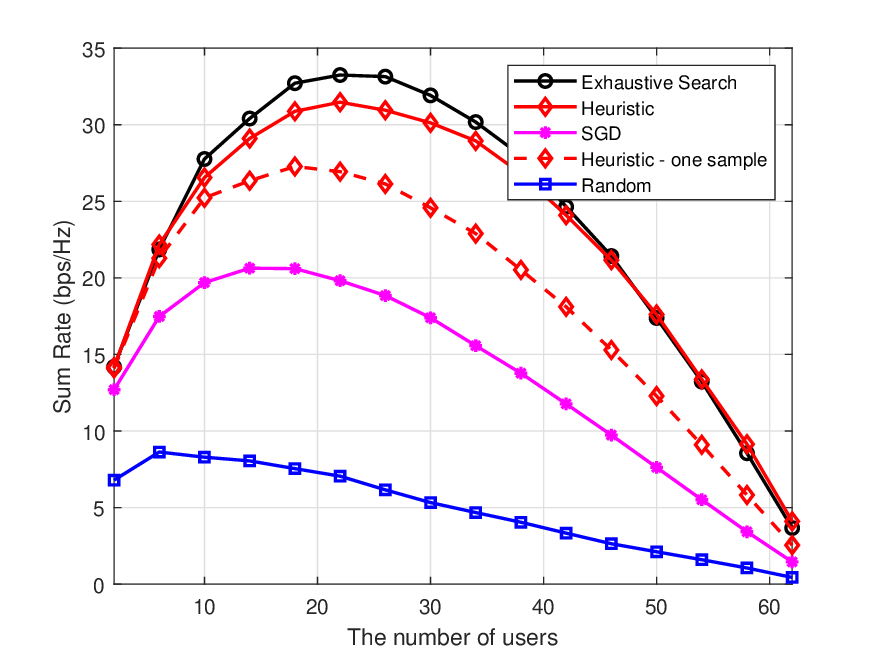}}
  \vspace{0.01\linewidth}
  \subfigure[ZF and MMSE precoder comparison]{\label{fig-mmse}
  \includegraphics[width=80mm]{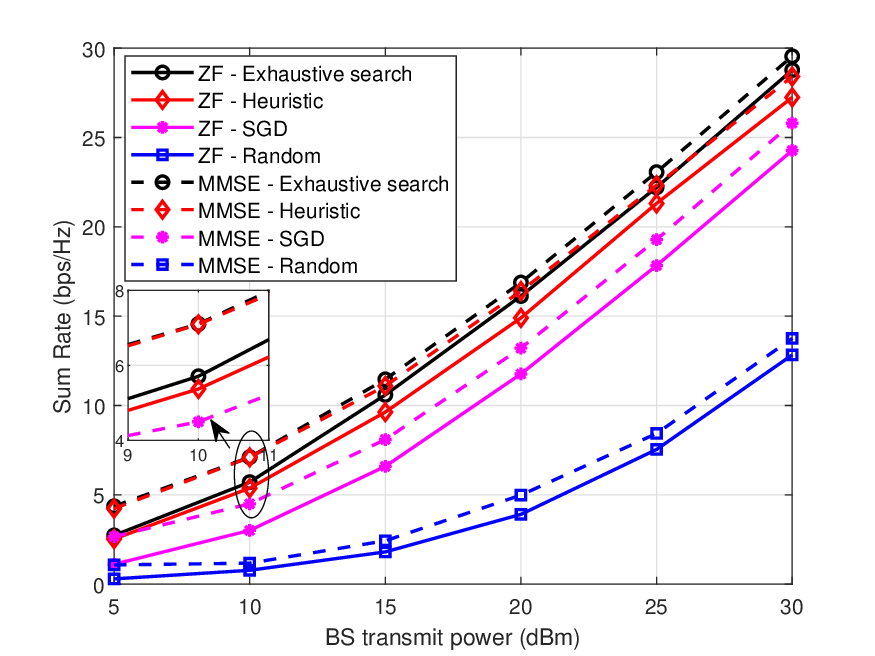}}
  \caption{Impact of other system parameters on performance.}
  \label{fig-other-para}
\end{figure}

Fig. \ref{fig-iterations} demonstrates the sum-rate as the number of user location samples varies. The curve for the SGD algorithm reflects its convergence, while the curve for the heuristic algorithm reflects the necessity of considering the users' long-term geographical distribution. 
From the figure, we know that the SGD-based algorithm can converge within at least $80$ user location samples, and in order to facilitate faster convergence, we set step size of the SGD-based algorithm to a larger value, such as $1$. However, as reported in \cite{Diederik_Adam_2015}, the step size is typically set to $10^{-4}$, indicating that a large number of user samples are needed.

For the heuristic algorithm, the sum-rate increases sharply with the increase in the number of samples when the number of samples is small.
 
However, as the number of samples continues to increase, the sum-rate tends to stabilize. Additionally, it can also be observed that the number of user location samples required for the heuristic algorithm is smaller than that required for convergence of the SGD algorithm, which indirectly reflects the low computational complexity of the heuristic algorithm. Moreover, we also add the convergence curve of the heuristic algorithm, where the x-axis represents the number of iterations. The heuristic method can converge after only $3$ iterations, while the SGD-based algorithm requires at least $80$ user samples, which highlights the low complexity of the heuristic method from another aspect.

\begin{figure}[hbtp]
  \centering
  \includegraphics[width=80mm]{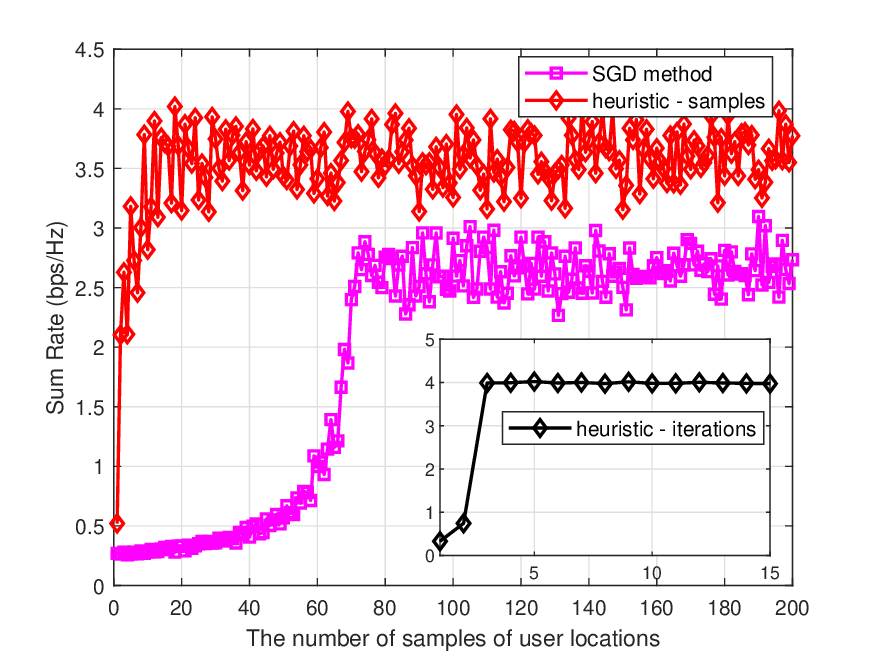}
  \caption{Sum-rate versus the number of samples of user locations and iterations.}
  \label{fig-iterations}
\end{figure}

Fig. \ref{fig-variation} depicts the impact of several location parameters on sum-rate, where Fig. \ref{fig-variation_do} describes the sum-rate against the distance between the BS and the RIS $d_0$, and Fig. \ref{fig-variation_phiR}  illustrates the impact of RIS' orientation on sum-rate. It should be noted that remaining parameters are kept constant according to the optimization results of the heuristic method. From \ref{fig-variation_do}, in all three distributions, the sum-rate exhibits a decreasing trend with increasing $d_0$, which is consistent with the previous analysis. Moreover,  the curves of \textit{multiple hotspots} and \textit{one hotspot} distributions have a temporary peak around at $d_0 = 50$, which is because the RIS is close to the hotspot area in the first quadrant when $d_0$ is around $50$. At this moment, the RIS is close to users, resulting in a temporary increase in the sum-rate. From this figure, it is more obvious that the sum-rate in the \textit{one hotspot} distribution is higher than the other two distributions.

Fig. \ref{fig-variation_phiR} not only draws the curves of sum-rate against the orientation of RIS, but also illustrates the RIS orientation optimization results based on the heuristic algorithms in \textit{multiple hotspots, one hotspot, and uniform} distribution, respectively, using diamond, pentagram, and square markers. We can observe that the optimized orientation is a sub-optimal solution since we do not directly solve the problem $\mathcal{P} 3\mbox{-}2$ when optimizing the orientation of RIS. Instead, we adopt heuristics to maximize the number of users served by the RIS. Besides, the sum-rate is very low at some orientations, because the RIS cannot receive the signal from the BS at these orientations, i.e., this is the system performance without the help of the RIS. In \textit{multiple hotspots} and \textit{one hotspot} distributions, the selection of the RIS' orientation affects the sum-rate to a some extent, while in the \textit{uniform distribution}, the curve is almost a straight line. This is because the number of users that the RIS can serve in all directions is almost the same in the \textit{uniform distribution}, resulting in the sum-rate remaining almost constant.
\begin{figure}
  \centering
  \subfigure[Sum-rate versus $d_{0}$]{\label{fig-variation_do}
  \includegraphics[width=80mm]{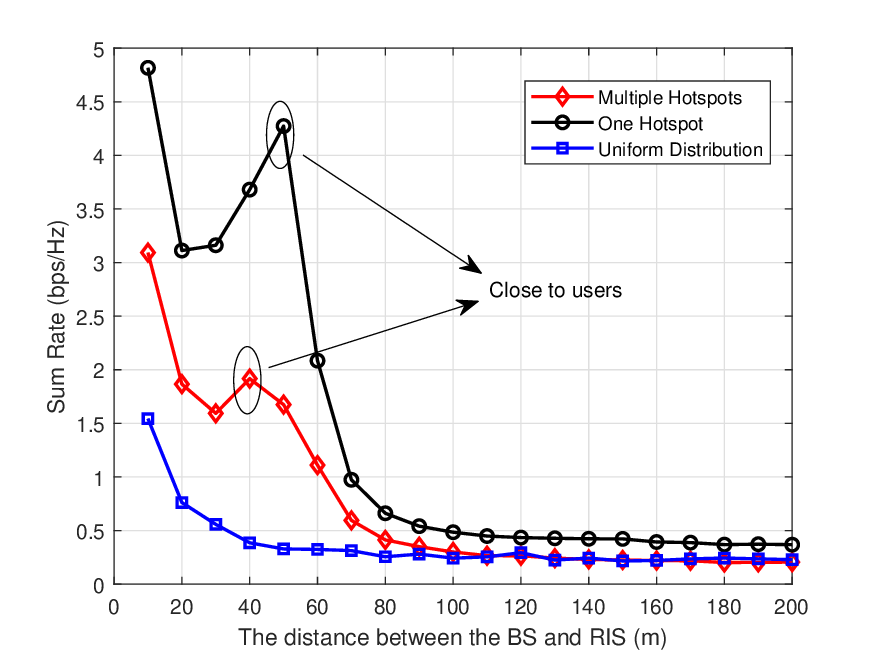}}
  \vspace{0.01\linewidth}
  \subfigure[Sum-rate versus the rotation of the RIS]{\label{fig-variation_phiR}
  \includegraphics[width=80mm]{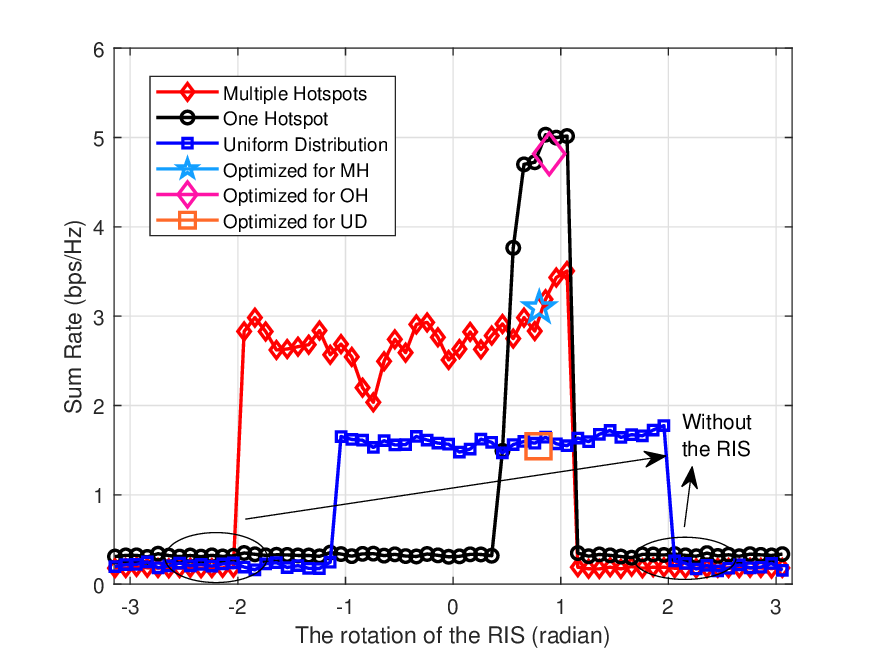}}
  \caption{Variation of sum-rate with parameters of the RIS' location.}
  \label{fig-variation}
\end{figure}

Fig. \ref{fig-theta_opt} verifies the performance improvement that can be brought by optimizing phase shifters of the RIS. 
In Fig. \ref{fig-theta_opt}, performance curves of continuous phase shifters, and discrete phase shifters with 1-bit resolution and 3-bit resolution are presented. From the figure, it can be observed that the performance improvement of continuous phase shifters over random phase shifters is $2.3$ bps/Hz at 20 dBm. The performance of discrete phase shifters with different resolutions is in between that of continuous phase shifters and random phase shifters, where the performance of 3-bit phase shifters is close enough to that of continuous phase shifters, providing a useful reference for practical selection of the RIS.
\begin{figure}[hbtp]
  \centering
  \includegraphics[width=80mm]{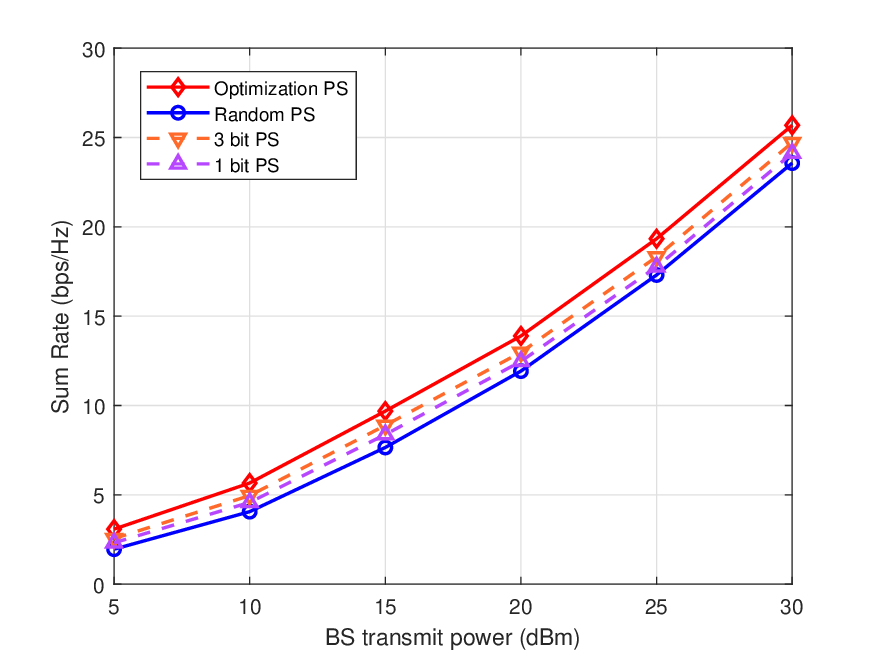}
  \caption{The optimization of phase shifters of the RIS.}
  \label{fig-theta_opt}
\end{figure}

\section{Conclusion}
In this paper, we propose a RIS deployment strategy for the wideband RIS-assisted mmWave MIMO system, based on the long-term geographic distribution of users, with the goal of maximizing the sum-rate of the cell. Firstly, we derive the average user rate of $R_{k,m}$, as well as the lower bound rate, by exploiting the characteristic that the covariance of the channel follows the Wishart distribution. Then, we propose a heuristic method based on the lower bound rate, where the RIS' orientation optimization problem is transformed into the problem of maximizing the number of users served by the RIS. It is concluded that the RIS should be deployed close to the BS. Finally, we analyze the computational complexity and convergence of the proposed algorithm and optimize the phase shifters of the RIS based on the specific locations of users. The simulation results show that the proposed deployment strategy can effectively improve the sum-rate of the cell, and the RIS is more suitable for serving one hotspot area. In our future work, we will study the deployment of multiple RISs in multiple cells with different users' geographic distributions by utilizing ML based methods.

\appendices
\section{Proof of Lemma \ref{lemma1}}
Since the elements of $\tilde{\mathbf{G}}_{m}$, $\tilde{\mathbf{d}}_{k,m}$ and $\tilde{\mathbf{h}}_{k,m}$ are the i.i.d complex Gaussian random variables, we have 
\begin{equation}
  \left\{\begin{array}{l} 
    \mathbb{E}\left[\tilde{\mathbf{G}}_{m}\right]=\mathbf{0}_{N_{t} \times N_{r}},\\
    \mathbb{E}\left[\tilde{\mathbf{d}}_{k,m}\right]=\mathbf{0}_{N_{t} \times 1}, \\
    \mathbb{E}\left[\tilde{\mathbf{h}}_{k,m}\right]=\mathbf{0}_{N_{r} \times 1}.
    \end{array}\right.
\end{equation} 
Then, the (\ref{sigma}) can be calculated as (\ref{hh}), which is shown in the bottom of the next page.
\begin{figure*}[hb] 
  \hrulefill
  \centering
\begin{equation}
\begin{aligned}	\label{hh}     
&\mathbb{E} \left[ \left({\mathbf{h}}_{i,m}^{\text{eff}} \right)^{H}{\mathbf{h}}_{j,m}^{\text{eff}} \right]  \\
=&\frac{K_{1}\sqrt{\beta_{1,i}\beta_{1,j}}}{K_{1}+1}\mathbb{E} \left[ \bar{\mathbf{d}}_{i,m}^{H}\bar{\mathbf{d}}_{j,m} \right] +\omega_{j}\sqrt{\frac{\beta_{0}\beta_{1,i}\beta_{2,j}K_{0}K_{1}K_{2}}{\left(K_{0}+1 \right)\left(K_{1}+1 \right)\left(K_{2}+1 \right)}}\mathbb{E} \left[ \bar{\mathbf{d}}_{i,m}^{H} \bar{\mathbf{G}}_{m}\mathbf{\Phi}  \bar{\mathbf{h}}_{j,m} \right]  \\
+& \frac{\sqrt{\beta_{1,i}\beta_{1,j}}}{K_{1}+1}\mathbb{E} \left[\tilde{\mathbf{d}}_{i,m}^{H}\tilde{\mathbf{d}}_{j,m} \right] +\omega_{i}\sqrt{\frac{\beta_{0}\beta_{1,i}\beta_{2,j}K_{0}K_{1}K_{2}}{\left(K_{0}+1 \right)\left(K_{1}+1 \right)\left(K_{2}+1 \right)}}\mathbb{E} \left[\bar{\mathbf{h}}_{i,m}^{H} \mathbf{\Phi}^{H} \bar{\mathbf{G}}_{m}^{H} \bar{\mathbf{d}}_{j,m}\right]  \\ 
+&\frac{\omega_{i}\omega_{j}K_{0}K_{2}\beta_{0}\sqrt{\beta_{2,i}\beta_{2,j}} }{\left(K_{0}+1 \right)\left(K_{2}+1 \right)}\mathbb{E} \left[\bar{\mathbf{h}}_{i,m}^{H} \mathbf{\Phi}^{H} \bar{\mathbf{G}}_{m}^{H}\bar{\mathbf{G}}_{m}\mathbf{\Phi}  \bar{\mathbf{h}}_{j,m} \right] + \frac{\omega_{i}\omega_{j}K_{0}\beta_{0}\sqrt{\beta_{2,i}\beta_{2,j}} }{\left(K_{0}+1 \right)\left(K_{2}+1 \right)}\mathbb{E} \left[\tilde{\mathbf{h}}_{i,m}^{H} \mathbf{\Phi}^{H} \bar{\mathbf{G}}_{m}^{H}\bar{\mathbf{G}}_{m}\mathbf{\Phi}  \tilde{\mathbf{h}}_{j,m} \right]  \\
+&\frac{\omega_{i}\omega_{j}K_{2}\beta_{0}\sqrt{\beta_{2,i}\beta_{2,j}} }{\left(K_{0}+1 \right)\left(K_{2}+1 \right)}\mathbb{E} \left[\bar{\mathbf{h}}_{i,m}^{H} \mathbf{\Phi}^{H} \tilde{\mathbf{G}}_{m}^{H}\tilde{\mathbf{G}}_{m}\mathbf{\Phi}  \bar{\mathbf{h}}_{j,m} \right] +\frac{\omega_{i}\omega_{j}\beta_{0}\sqrt{\beta_{2,i}\beta_{2,j}} }{\left(K_{0}+1 \right)\left(K_{2}+1 \right)}\mathbb{E} \left[\tilde{\mathbf{h}}_{i,m}^{H} \mathbf{\Phi}^{H} \tilde{\mathbf{G}}_{m}^{H}\tilde{\mathbf{G}}_{m}\mathbf{\Phi}  \tilde{\mathbf{h}}_{j,m} \right].
  \end{aligned}
\end{equation}
\end{figure*}

Now we should calculate $\mathbb{E} \left[ \bar{\mathbf{d}}_{i,m}^{H}\bar{\mathbf{d}}_{j,m} \right]$, $\mathbb{E} \left[ \tilde{\mathbf{d}}_{i,m}^{H}\tilde{\mathbf{d}}_{j,m} \right]$, $\mathbb{E} \left[ \bar{\mathbf{d}}_{i,m}^{H} \bar{\mathbf{G}}_{m}\mathbf{\Phi}  \bar{\mathbf{h}}_{j,m} \right]$, $\mathbb{E} \left[\bar{\mathbf{h}}_{i,m}^{H} \mathbf{\Phi}^{H} \bar{\mathbf{G}}_{m}^{H} \bar{\mathbf{d}}_{j,m}\right]$, $\mathbb{E} \left[\bar{\mathbf{h}}_{i,m}^{H} \mathbf{\Phi}^{H} \bar{\mathbf{G}}_{m}^{H}\bar{\mathbf{G}}_{m}\mathbf{\Phi}  \bar{\mathbf{h}}_{j,m} \right]  $, $ \mathbb{E} \left[\tilde{\mathbf{h}}_{i,m}^{H} \mathbf{\Phi}^{H} \bar{\mathbf{G}}_{m}^{H}\bar{\mathbf{G}}_{m}\mathbf{\Phi}  \tilde{\mathbf{h}}_{j,m} \right] $, $ \mathbb{E} \left[\bar{\mathbf{h}}_{i,m}^{H} \mathbf{\Phi}^{H} \tilde{\mathbf{G}}_{m}^{H}\tilde{\mathbf{G}}_{m}\mathbf{\Phi}  \bar{\mathbf{h}}_{j,m} \right] $ and $\mathbb{E} \left[\tilde{\mathbf{h}}_{i,m}^{H} \mathbf{\Phi}^{H} \tilde{\mathbf{G}}_{m}^{H}\tilde{\mathbf{G}}_{m}\mathbf{\Phi}  \tilde{\mathbf{h}}_{j,m} \right]$, respectively.

Due to different AoDs for different users, we have 
\begin{equation}
  \left\{\begin{array}{l} 
    \mathbb{E} \left[ \bar{\mathbf{d}}_{i,m}^{H}\bar{\mathbf{d}}_{j,m} \right] = 1, i = j,\\
    \mathbb{E} \left[ \bar{\mathbf{d}}_{i,m}^{H}\bar{\mathbf{d}}_{j,m} \right] = 0, i \neq j.
    \end{array}\right.
\end{equation} 
And $\mathbb{E} \left[ \tilde{\mathbf{d}}_{i,m}^{H}\tilde{\mathbf{d}}_{j,m} \right]$ equals to $1$ when $i = j$, otherwise equals to $0$. When AoDs of the RIS and users at the BS are distinguishable, $\mathbb{E} \left[ \bar{\mathbf{d}}_{i,m}^{H} \bar{\mathbf{G}}_{m}\mathbf{\Phi}  \bar{\mathbf{h}}_{j,m} \right]$ and $\mathbb{E} \left[\bar{\mathbf{h}}_{i,m}^{H} \mathbf{\Phi}^{H} \bar{\mathbf{G}}_{m}^{H} \bar{\mathbf{d}}_{j,m}\right]$ are equal to $0$. For $\mathbb{E} \left[\bar{\mathbf{h}}_{i,m}^{H} \mathbf{\Phi}^{H} \bar{\mathbf{G}}_{m}^{H}\bar{\mathbf{G}}_{m}\mathbf{\Phi}  \bar{\mathbf{h}}_{j,m} \right]$, we have (\ref{hphig}), which is shown in the bottom of the next page.
\begin{figure*}[hb] 
  \hrulefill
  \centering
\begin{equation}
  \begin{aligned} \label{hphig}
  &\mathbb{E} \left[\bar{\mathbf{h}}_{i,m}^{H} \mathbf{\Phi}^{H} \bar{\mathbf{G}}_{m}^{H}\bar{\mathbf{G}}_{m}\mathbf{\Phi}  \bar{\mathbf{h}}_{j,m} \right]  \\
= &\mathbb{E} \left[ \mathbf{a}\left( \theta_{2,i,m}^{e}, \theta_{2,i,m}^{a}\right)\mathbf{\Phi}^{H} \mathbf{a}^{H}\left( \theta_{0,m}^{e}, \theta_{0,m}^{a}\right)\mathbf{b}\left( \phi_{0,m} \right) \mathbf{b}^{H}\left( \phi_{0,m} \right) \mathbf{a}\left( \theta_{0,m}^{e}, \theta_{0,m}^{a}\right)\mathbf{\Phi}\mathbf{a}^{H}\left( \theta_{2,k,m}^{e}, \theta_{2,j,m}^{a}\right)  \right] \\
= &\mathbb{E} \left[ \mathbf{a}\left( \theta_{2,i,m}^{e}, \theta_{2,i,m}^{a}\right)\mathbf{\Phi}^{H} \mathbf{a}^{H}\left( \theta_{0,m}^{e}, \theta_{0,m}^{a}\right) \mathbf{a}\left( \theta_{0,m}^{e}, \theta_{0,m}^{a}\right)\mathbf{\Phi}\mathbf{a}^{H}\left( \theta_{2,j,m}^{e}, \theta_{2,j,m}^{a}\right)  \right].
  \end{aligned}
\end{equation}
\end{figure*}

When $i=j$, the following items can be calculated as 
\begin{align}
  &\mathbb{E} \left[\tilde{\mathbf{h}}_{i,m}^{H} \mathbf{\Phi}^{H} \bar{\mathbf{G}}_{m}^{H}\bar{\mathbf{G}}_{m}\mathbf{\Phi}  \tilde{\mathbf{h}}_{j,m} \right] \notag \\ 
  = &\text{Tr}\left( \mathbf{\Phi}^{H} \bar{\mathbf{G}}_{m}^{H}\bar{\mathbf{G}}_{m}\mathbf{\Phi}\right)=\text{Tr}\left( \bar{\mathbf{G}}_{m}^{H}\bar{\mathbf{G}}_{m}\right)=1, \\ \label{res_tr}
  &\mathbb{E} \left[\bar{\mathbf{h}}_{i,m}^{H} \mathbf{\Phi}^{H} \tilde{\mathbf{G}}_{m}^{H}\tilde{\mathbf{G}}_{m}\mathbf{\Phi}  \bar{\mathbf{h}}_{j,m} \right]=\mathbb{E} \left[\bar{\mathbf{h}}_{i,m}^{H}\bar{\mathbf{h}}_{j,m} \right] = 1, \\
  &\mathbb{E} \left[\tilde{\mathbf{h}}_{i,m}^{H} \mathbf{\Phi}^{H} \tilde{\mathbf{G}}_{m}^{H}\tilde{\mathbf{G}}_{m}\mathbf{\Phi}  \tilde{\mathbf{h}}_{j,m} \right]= \mathbb{E} \left[\tilde{\mathbf{h}}_{i,m}^{H}\tilde{\mathbf{h}}_{j,m} \right] = 1,
\end{align}
where the result of (\ref{res_tr}) is obtained from \cite{Gan_RIS_2021}. For the case $i \neq j$, the above three items are equal to $0$.

Combining all above results, we have (\ref{res_hh}). The proof ends here.
\section{Proof of Lemma \ref{lemma3}}
Let $a_{1} = \sum_{t=1}^{T}-2 d_{0} d_{t} \cos \varphi_{t}$ and $a_{2} = \sum_{t=1}^{T}-2 d_{0} d_{t} \sin \varphi_{t}$, the problem becomes
\addtocounter{equation}{0}
\begin{IEEEeqnarray}{lcll} 
    &\min _{ \varphi_{0}} \quad & y_{3} = a_{1} \cos \varphi_{0} + a_{2}\sin \varphi_{0} \label{p_Ab} \\
  &\text{s.t.}  & \varphi_{0} \in [0,2\pi].     \IEEEyessubnumber\label{p_Ab_constraint1}
\end{IEEEeqnarray}
Define $x \triangleq\cos \varphi_{0}$, the first-order and second-order derivative are 
\begin{equation}
  \left\{\begin{array}{ll}
  \frac{\partial y_{3}}{\partial x} &= a_{1}- a_{2} \frac{x}{\sqrt{1-x^{2}}}, \\
  \frac{\partial^{2} y_{3}}{\partial x^2} &= -a_{2} \frac{1}{\sqrt{1-x^{2}}} - a_{2}x^2\left(1-x^{2}\right)^{-\frac{3}{2}}.
  \end{array}\right.
\end{equation}
There are two cases need to be discussed. Firstly, when $a_2 < 0$, $\frac{\partial^{2} y_{3}}{\partial x^2} > 0$, the first-order derivative is monotonically increasing. Due to $x \in [-1,1]$, the objective function decreases first and then increases. The optimal $\varphi_{0}$ is found by 
\begin{align} \label{varphi0_1}
  \varphi_{0}^{*}=\arccos \frac{a_{1}}{\sqrt{4 a_{2}^{2}+a_{1}^{2}}}.
\end{align}

Then, when $a_2 > 0$, the objective function increases first and then decreases. Then the optimal $\varphi_{0}$ is found by
\begin{equation} \label{varphi0_2}
  \varphi_{0}^{*} = 
  \left\{\begin{array}{l} 
      0, \quad y_{3}(x=1) \leq y_{3}(x=-1) ,\\
      \pi, \quad y_{3}(x=-1) \leq y_{3}(x=1). 
    \end{array}\right.
\end{equation} 
Combining (\ref{varphi0_1}) and (\ref{varphi0_2}), we have the conclusion of (\ref{varphi0_opt}).

\section{The RIS deployment optimaization when the BS employs MMSE precoder} \label{mmse-solution}
Firstly, the MMSE precoder can be expressed as $ \mathbf{U}_{m} = \left({\mathbf{H}}_{m}^{\text{eff}}\right)^{H}\left( {\mathbf{H}}_{m}^{\text{eff}}\left({\mathbf{H}}_{m}^{\text{eff}}\right)^{H} + \alpha_{m} \mathbf{I}\right)^{-1}$, where $\alpha_{m} = {\sum_{k=1}^{K}p_{k,m}}/{\sigma^{2}}$. The expression for user rate remains same as shown in (\ref{R_km}). With the MMSE precoder, the expression of $\| \mathbf{u}_{k,m}\|^{2}$ is given by $\| \mathbf{u}_{k,m}\|^{2} =\left[\left( {\mathbf{H}}_{m}^{\text{eff}}\left({\mathbf{H}}_{m}^{\text{eff}}\right)^{H} + \alpha_{m}\mathbf{I}\right)^{-1}\right]_{k,k}$. 

Then, according to the matrix inversion formula $\left( \mathbf{A} +\mathbf{B} \right)^{-1} = \mathbf{B}^{-1} \left( \mathbf{A}^{-1} +\mathbf{B}^{-1} \right)^{-1}\mathbf{A}^{-1}$, we have
\begin{equation}
  \begin{aligned}
    &\left({\mathbf{H}}_{m}^{\text{eff}}\left({\mathbf{H}}_{m}^{\text{eff}}\right)^{H} + \alpha_{m}\mathbf{I}\right)^{-1} \\
    = &\frac{1}{\alpha_{m}} \left(  \left({\mathbf{H}}_{m}^{\text{eff}}\left({\mathbf{H}}_{m}^{\text{eff}}\right)^{H} \right)^{-1}  + \frac{1}{\alpha_{m}} \right)^{-1} \left({\mathbf{H}}_{m}^{\text{eff}}\left({\mathbf{H}}_{m}^{\text{eff}}\right)^{H} \right)^{-1}.
    \end{aligned}
\end{equation}

Since $\mathbf{H}_{m}^{\text{eff}} \left( \mathbf{H}_{m}^{\text{eff}}\right)^{H}$ follows the Wishart distribution, based on the analysis in Section III and the properties of the Wishart distribution, we can obtain $\mathbb{E}\left[  \left({\mathbf{H}}_{m}^{\text{eff}}\left({\mathbf{H}}_{m}^{\text{eff}}\right)^{H} \right)^{-1} \right] = \frac{1}{N_{t}-K} \hat{\boldsymbol{\Sigma}}_{m}^{-1}$, and elements of $\hat{\boldsymbol{\Sigma}}_{m}^{-1}$ are given by
\begin{equation}
  \begin{aligned}
    \left[\hat{\boldsymbol{\Sigma}}_{m}^{-1} \right]_{i,j} = 
    \left\{
    \begin{array}{ll}
     \frac{1}{\kappa_{i}} - \frac{\tau \frac{|\Xi_{m,i}|^{2}}{\kappa_{i}^{2}}}{1 + \tau \sum_{k=1}^{K}\frac{|\Xi_{m,k}|^{2}}{\kappa_{k}}}, &i = j,\\
      - \frac{\tau \Xi_{m,i}\Xi_{m,j}\left[\tilde{\boldsymbol{\Sigma}}_{m}^{-1}\right]_{i,i}\left[\tilde{\boldsymbol{\Sigma}}_{m}^{-1}\right]_{j,j}}{1 + \tau \sum_{k=1}^{K}\frac{|\Xi_{m,k}|^{2}}{\kappa_{k}}}, &i \neq j.
    \end{array}
    \right.
    \end{aligned}
\end{equation}

Define $\mathbf{Z}_{m} \triangleq \left({\mathbf{H}}_{m}^{\text{eff}}\left({\mathbf{H}}_{m}^{\text{eff}}\right)^{H} \right)^{-1}  + \frac{1}{\alpha_{m}}$, then we have $\mathbf{Z}_{m}= \boldsymbol{\Lambda}_{m} + \frac{b}{N_{t}-K}\hat{\boldsymbol{\Xi}}_{m}\hat{\boldsymbol{\Xi}}_{m}^{H}$, where $\hat{\boldsymbol{\Xi}}_{m} =\tilde{\boldsymbol{\Sigma}}_{m}^{-1} \boldsymbol{\Xi}_{m}$.  $\boldsymbol{\Lambda}_{m}$ is a diagonal matrix whose the $(k,k)$-th element is ${\Lambda}_{m,k} = \frac{1}{N_{t}-K}\left[ \hat{\boldsymbol{\Sigma}}_{m}^{-1} \right]_{k,k} + \frac{1}{\alpha_{m}} - \frac{b}{N_{t}-K} |\hat{\Xi}_{m,k}|^{2}$, where $b =  - \frac{\tau}{1 + \tau \sum_{k=1}^{K}\frac{|\hat{\Xi}_{m,k}|^{2}}{\kappa_{k}}}$. By applying Sherman-Morrison formula, we have 
\begin{equation}
  \mathbf{Z}_{m}^{-1} = {\boldsymbol{\Lambda}}_{m}^{-1} - \frac{\hat{b}^{2}{\boldsymbol{\Lambda}}_{m}^{-1} \hat{\boldsymbol{\Xi}}_{m}\hat{\boldsymbol{\Xi}}_{m}^{H}{\boldsymbol{\Lambda}}_{m}^{-1}}{1 + \hat{b}^{2}\hat{\boldsymbol{\Xi}}_{m}^{H}{\boldsymbol{\Lambda}}_{m}^{-1}\hat{\boldsymbol{\Xi}}_{m}},
\end{equation}
where $\hat{b} = \frac{b}{N_{t}-K}$. Then the elements of $\mathbf{Z}_{m}^{-1}$ can be calculated as
\begin{equation}
  \begin{aligned}
    \left[\mathbf{Z}_{m}^{-1} \right]_{i,j} = 
    \left\{
    \begin{array}{ll}
     \frac{1}{\Lambda_{m,i}} - \frac{\hat{b}\frac{|\hat{\Xi}_{m,i}|^{2}}{\Lambda_{m,i}^{2}}}{1 + \hat{b} \sum_{k=1}^{K}\frac{|\hat{\Xi}_{m,k}|^{2}}{\Lambda_{m,k}}}, &i = j, \\
     - \frac{\hat{b} \hat{\Xi}_{m,i}\hat{\Xi}_{m,j}/(\Lambda_{m,i}\Lambda_{m,j})}{1 + \hat{b} \sum_{k=1}^{K}\frac{|\Xi_{m,k}|^{2}}{\Lambda_{m,k}}},& i \neq j.
    \end{array}
    \right.
    \end{aligned}
\end{equation}

Finally, the expression of $\left({\mathbf{H}}_{m}^{\text{eff}}\left({\mathbf{H}}_{m}^{\text{eff}}\right)^{H} + \alpha_{m}\mathbf{I}\right)^{-1} $ is given by $\frac{1}{\alpha_{m}\left(N_{t}-K\right)}\mathbf{Z}_{m}^{-1}\hat{\boldsymbol{\Sigma}}_{m}^{-1}$, and the $(k,k)$-th element is $\varpi_{m,k} =\frac{1}{\alpha_{m}\left(N_{t}-K\right)} \sum_{i=1}^{K}\left[\mathbf{Z}_{m}^{-1} \right]_{k,i}\left[ \hat{\boldsymbol{\Sigma}}_{m}^{-1} \right]_{i,k}$. The average user rate with MMSE precoder is 
\begin{equation}
  R_{k,m}^{\mathrm{MMSE} } = \log_{2}\left( 1 + \frac{p_{k,m}}{\sigma^{2} \varpi_{k,m}} \right).
\end{equation}

After obtaining the average user rate, the deployment optimization problem for the RIS location can still be solved using the heuristic algorithm proposed in Section IV.


\ifCLASSOPTIONcaptionsoff
  \newpage
\fi

\bibliographystyle{IEEEtran}
\bibliography{ref.bib}

\begin{thebibliography}{10}
\providecommand{\url}[1]{#1}
\csname url@samestyle\endcsname
\providecommand{\newblock}{\relax}
\providecommand{\bibinfo}[2]{#2}
\providecommand{\BIBentrySTDinterwordspacing}{\spaceskip=0pt\relax}
\providecommand{\BIBentryALTinterwordstretchfactor}{4}
\providecommand{\BIBentryALTinterwordspacing}{\spaceskip=\fontdimen2\font plus
\BIBentryALTinterwordstretchfactor\fontdimen3\font minus \fontdimen4\font\relax}
\providecommand{\BIBforeignlanguage}[2]{{%
\expandafter\ifx\csname l@#1\endcsname\relax
\typeout{** WARNING: IEEEtran.bst: No hyphenation pattern has been}%
\typeout{** loaded for the language `#1'. Using the pattern for}%
\typeout{** the default language instead.}%
\else
\language=\csname l@#1\endcsname
\fi
#2}}
\providecommand{\BIBdecl}{\relax}
\BIBdecl

\bibitem{basar2019wireless}
E.~Basar, M.~Di~Renzo, J.~De~Rosny, M.~Debbah, M.-S. Alouini, and R.~Zhang, ``Wireless communications through reconfigurable intelligent surfaces,'' \emph{IEEE Access}, vol.~7, pp. 116\,753--116\,773, 2019.

\bibitem{di2020smart}
M.~Di~Renzo, A.~Zappone, M.~Debbah, M.-S. Alouini, C.~Yuen, J.~De~Rosny, and S.~Tretyakov, ``Smart radio environments empowered by reconfigurable intelligent surfaces: How it works, state of research, and the road ahead,'' \emph{IEEE J. Sel. Areas Commun.}, vol.~38, no.~11, pp. 2450--2525, 2020.

\bibitem{huang2019reconfigurable}
C.~Huang, A.~Zappone, G.~C. Alexandropoulos, M.~Debbah, and C.~Yuen, ``Reconfigurable intelligent surfaces for energy efficiency in wireless communication,'' \emph{IEEE Trans. Wirel. Commun.}, vol.~18, no.~8, pp. 4157--4170, 2019.

\bibitem{renzo2019smart}
M.~D. Renzo, M.~Debbah, D.-T. Phan-Huy, A.~Zappone, M.-S. Alouini, C.~Yuen, V.~Sciancalepore, G.~C. Alexandropoulos, J.~Hoydis, H.~Gacanin \emph{et~al.}, ``Smart radio environments empowered by reconfigurable ai meta-surfaces: An idea whose time has come,'' \emph{EURASIP J. Wirel. Commun. Netw.}, vol. 2019, no.~1, pp. 1--20, 2019.

\bibitem{wu2019towards}
Q.~Wu and R.~Zhang, ``Towards smart and reconfigurable environment: Intelligent reflecting surface aided wireless network,'' \emph{IEEE Commun. Mag.}, vol.~58, no.~1, pp. 106--112, 2019.

\bibitem{arun2020rfocus}
V.~Arun and H.~Balakrishnan, ``{RFocus}: Beamforming using thousands of passive antennas,'' in \emph{Proc. USENIX Symp. Networked Syst. Des. Implement.}, Santa Clara, CA, United States, 2020, pp. 1047--1061.

\bibitem{wei2021channel}
X.~Wei, D.~Shen, and L.~Dai, ``Channel estimation for ris assisted wireless communications—part {II}: An improved solution based on double-structured sparsity,'' \emph{IEEE Commun. Lett.}, vol.~25, no.~5, pp. 1403--1407, 2021.

\bibitem{wu2019intelligent}
Q.~Wu and R.~Zhang, ``Intelligent reflecting surface enhanced wireless network via joint active and passive beamforming,'' \emph{IEEE Trans. Wirel. Commun.}, vol.~18, no.~11, pp. 5394--5409, 2019.

\bibitem{huang2018achievable}
C.~Huang, A.~Zappone, M.~Debbah, and C.~Yuen, ``Achievable rate maximization by passive intelligent mirrors,'' in \emph{Proc.IEEE Int.Conf.Acoust.Speech.Signal.Process}, Calgary, AB, Canada, 2018, pp. 3714--3718.

\bibitem{zhang2021joint}
Z.~Zhang and L.~Dai, ``A joint precoding framework for wideband reconfigurable intelligent surface-aided cell-free network,'' \emph{IEEE Trans. Signal Process.}, vol.~69, pp. 4085--4101, 2021.

\bibitem{yang2019irs}
Y.~Yang, S.~Zhang, and R.~Zhang, ``{IRS-enhanced OFDM}: Power allocation and passive array optimization,'' in \emph{Proc. IEEE Glob. Commun.Conf.}, Waikoloa, HI, United States, 2019, pp. 1--6.

\bibitem{Huang_Novel_2022}
J.~Huang, C.-X. Wang, Y.~Sun, J.~Huang, and F.-C. Zheng, ``A novel ray tracing based 6g ris wireless channel model and ris deployment studies in indoor scenarios,'' in \emph{Proc.IEEE Int.Symp.Person.Indoor.Mobile.Radio.Commun}, Virtual, Online, Japan, 2022-09, pp. 884--889.

\bibitem{deb2021ris}
S.~Deb and S.~C. Ghosh, ``An {RIS} deployment strategy to overcome static obstacles in millimeter wave {D2D} communication,'' in \emph{Proc.IEEE Int. Symp. Netw. Comput. Appl.}, Boston, MA, United States, 2021, pp. 1--8.

\bibitem{ghatak2021placement}
G.~Ghatak, ``On the placement of intelligent surfaces for {RSSI-based} ranging in mm-wave networks,'' \emph{IEEE Commun. Lett.}, vol.~25, no.~6, pp. 2043--2047, 2021.

\bibitem{zeng2020reconfigurable}
S.~Zeng, H.~Zhang, B.~Di, Z.~Han, and L.~Song, ``Reconfigurable intelligent surface {(RIS)} assisted wireless coverage extension: {RIS} orientation and location optimization,'' \emph{IEEE Commun. Lett.}, vol.~25, no.~1, pp. 269--273, 2020.

\bibitem{nemati2020ris}
M.~Nemati, J.~Park, and J.~Choi, ``{RIS}-assisted coverage enhancement in millimeter-wave cellular networks,'' \emph{IEEE Access}, vol.~8, pp. 188\,171--188\,185, 2020.

\bibitem{cui2021snr}
Z.~Cui, K.~Guan, J.~Zhang, and Z.~Zhong, ``{SNR} coverage probability analysis of {RIS}-aided communication systems,'' \emph{IEEE Trans. Veh. Technol.}, vol.~70, no.~4, pp. 3914--3919, 2021.

\bibitem{bie2022deployment}
Q.~Bie, Y.~Liu, Y.~Wang, X.~Zhao, and X.~Y. Zhang, ``Deployment optimization of reconfigurable intelligent surface for relay systems,'' \emph{IEEE Trans. Green Commun. Netw.}, vol.~6, no.~1, pp. 221--233, 2022.

\bibitem{xu2021reconfigurable}
B.~Xu, T.~Zhou, T.~Xu, and Y.~Wang, ``Reconfigurable intelligent surface configuration and deployment in three-dimensional scenarios,'' in \emph{2021 IEEE International Conference on Communications Workshops (ICC Workshops)}, Virtual, Online, 2021, pp. 1--6.

\bibitem{tian2022optimizing}
X.~Tian, N.~Gonzalez-Prelcic, and R.~W. Heath~Jr, ``Optimizing the deployment of reconfigurable intelligent surfaces in {mmWave} vehicular systems,'' \emph{arXiv:2205.15520}, 2022.

\bibitem{ntontin2020reconfigurable}
K.~Ntontin, A.-A.~A. Boulogeorgos, D.~Selimis, F.~Lazarakis, A.~Alexiou, and S.~Chatzinotas, ``Reconfigurable intelligent surface optimal placement in millimeter-wave networks,'' \emph{arXiv:2011.09949}, 2020.

\bibitem{chen2022reconfigurable}
A.~Chen, Y.~Chen, and Z.~Wang, ``Reconfigurable intelligent surface deployment for blind zone improvement in {mmWave} wireless networks,'' \emph{IEEE Commun. Lett.}, 2022.

\bibitem{zhang2022reconfigurable}
Y.~Zhang, J.~Zhang, M.~Di~Renzo, H.~Xiao, and B.~Ai, ``Reconfigurable intelligent surfaces with outdated channel state information: Centralized vs. distributed deployments,'' \emph{IEEE Trans. Commun.}, vol.~70, no.~4, pp. 2742--2756, 2022.

\bibitem{liu2020ris}
X.~Liu, Y.~Liu, Y.~Chen, and H.~V. Poor, ``{RIS} enhanced massive non-orthogonal multiple access networks: Deployment and passive beamforming design,'' \emph{IEEE J. Sel. Areas Commun.}, vol.~39, no.~4, pp. 1057--1071, 2020.

\bibitem{Qin_Indoor_2022}
H.~Qin, Z.~Liu, and C.~Yang, ``Indoor mm-wave coverage enhancement: Reconfigurable intelligent surface deployment strategy based on human mobility model,'' \emph{IEEE Commun. Lett.}, vol.~26, no.~10, pp. 2475--2479, 2022-10.

\bibitem{wang_Pervasive_2022}
C.-X. Wang, Z.~Lv, X.~Gao, X.~You, Y.~Hao, and H.~Haas, ``Pervasive wireless channel modeling theory and applications to 6g {GBSMs} for all frequency bands and all scenarios,'' \emph{IEEE Trans. Veh. Technol.}, vol.~71, no.~9, pp. 9159--9173, 2022.

\bibitem{Overview_Heath_2016}
R.~W. Heath, N.~González-Prelcic, S.~Rangan, W.~Roh, and A.~M. Sayeed, ``An overview of signal processing techniques for millimeter wave {MIMO} systems,'' \emph{IEEE J. Sel. Top. Signal Process.}, vol.~10, no.~3, pp. 436--453, 2016.

\bibitem{Millimeter_Rappaport_2013}
T.~S. Rappaport, S.~Sun, R.~Mayzus, H.~Zhao, Y.~Azar, K.~Wang, G.~N. Wong, J.~K. Schulz, M.~Samimi, and F.~Gutierrez, ``Millimeter wave mobile communications for {5G} cellular: It will work!'' \emph{IEEE Access}, vol.~1, pp. 335--349, 2013.

\bibitem{Beamspace_Mo_2021}
X.~Mo, W.~Ma, L.~Gui, L.~Zhang, and X.~Sang, ``Beamspace channel estimation with beam squint effect for the millimeter-wave {MIMO-OFDM} systems,'' \emph{IEEE Access}, vol.~9, pp. 153\,037--153\,049, 2021.

\bibitem{Beam_wang_2019}
B.~Wang, M.~Jian, F.~Gao, G.~Y. Li, and H.~Lin, ``Beam squint and channel estimation for wideband mmwave massive {MIMO-OFDM} systems,'' \emph{IEEE Trans. Signal Process.}, vol.~67, no.~23, pp. 5893--5908, 2019.

\bibitem{Xiao_Average_2022}
G.~Xiao, T.~Yang, C.~Huang, X.~Wu, H.~Feng, and B.~Hu, ``Average rate approximation and maximization for ris-assisted multi-user miso system,'' \emph{IEEE Wirel. Commun. Lett.}, vol.~11, no.~1, pp. 173--177, 2022.

\bibitem{Mo_Wideband_2022a}
X.~Mo, L.~Gui, X.~Sang, and X.~Diao, ``Wideband sparse cascaded channel estimation for ris-assisted wireless systems,'' in \emph{Proc.IEEE Int. Conf. Wirel. Commun. Signal Process}, Virtual, Online, China, 2022-11, pp. 495--500.

\bibitem{Xie_Performance_2023}
K.~Xie, G.~Cai, G.~Kaddoum, and J.~He, ``Performance analysis and resource allocation of star-ris-aided wireless-powered noma system,'' \emph{IEEE Trans. Commun.}, vol.~71, no.~10, pp. 5740--5755, 2023.

\bibitem{Li_Achievable_2023}
Q.~Li, M.~El-Hajjar, Y.~Sun, I.~Hemadeh, A.~Shojaeifard, Y.~Liu, and L.~Hanzo, ``Achievable rate analysis of the star-ris-aided noma uplink in the face of imperfect csi and hardware impairments,'' \emph{IEEE Trans. Commun.}, vol.~71, no.~10, pp. 6100--6114, 2023.

\bibitem{Singh_Performance_2023}
K.~Singh, F.~Karim, S.~K. Singh, P.~K. Sharma, S.~Mumtaz, and M.~F. Flanagan, ``Performance analysis of ris-assisted full-duplex communications with infinite and finite blocklength codes,'' \emph{IEEE Trans. Commun.}, vol.~71, no.~7, pp. 4262--4282, 2023.

\bibitem{muirhead2009aspects}
R.~J. Muirhead, \emph{Aspects of multivariate statistical theory}.\hskip 1em plus 0.5em minus 0.4em\relax John Wiley \& Sons, 2009.

\bibitem{wu2021intelligent}
Q.~Wu, S.~Zhang, B.~Zheng, C.~You, and R.~Zhang, ``Intelligent reflecting surface-aided wireless communications: A tutorial,'' \emph{IEEE Trans. Commun.}, vol.~69, no.~5, pp. 3313--3351, 2021.

\bibitem{Steyn_Approximations_1972}
H.~. R.~J. Steyn, ``Approximations for the non-central {{Wishart}} distribution,'' \emph{South Afr. Stat. J.}, vol.~6, no.~2, pp. 165--173, 1972.

\bibitem{Kong_Performance_2015}
C.~Kong, C.~Zhong, M.~Matthaiou, and Z.~Zhang, ``Performance of downlink massive {MIMO} in ricean fading channels with {ZF} precoder,'' in \emph{Proc. IEEE Int.Conf.Commun.}, London, United Kingdom, 2015, pp. 1776--1782.

\bibitem{Kim_Guide_2015}
S.~Kim, R.~Pasupathy, and S.~G. Henderson, ``A {{Guide}} to {{Sample Average Approximation}},'' in \emph{Handbook of {{Simulation Optimization}}}, M.~C. Fu, Ed.\hskip 1em plus 0.5em minus 0.4em\relax {Springer New York}, 2015, pp. 207--243.

\bibitem{Pan_Self_2022}
Y.~Pan, K.~Wang, C.~Pan, H.~Zhu, and J.~Wang, ``Self-sustainable reconfigurable intelligent surface aided simultaneous terahertz information and power transfer {(STIPT)},'' \emph{IEEE Trans. Wirel. Commun.}, vol.~21, no.~7, pp. 5420--5434, 2022.

\bibitem{Wu_Towards_2020}
Q.~Wu and R.~Zhang, ``Towards smart and reconfigurable environment: Intelligent reflecting surface aided wireless network,'' \emph{IEEE Commun. Mag.}, vol.~58, no.~1, pp. 106--112, 2020.

\bibitem{Wu_Intelligent_2019}
------, ``Intelligent reflecting surface enhanced wireless network via joint active and passive beamforming,'' \emph{IEEE Trans. Wirel. Commun.}, vol.~18, no.~11, pp. 5394--5409, 2019.

\bibitem{Werner_On_1967}
W.~Dinkelbach, ``On nonlinear fractional programming,'' \emph{Manage. Sci.}, vol.~13, no.~7, pp. 492--498, 1967.

\bibitem{Chen_Reconfigurable_2022}
A.~Chen, Y.~Chen, and Z.~Wang, ``Reconfigurable intelligent surface deployment for blind zone improvement in mmwave wireless networks,'' \emph{IEEE Commun. Lett.}, vol.~26, no.~6, pp. 1423--1427, 2022.

\bibitem{pan2020multicell}
C.~Pan, H.~Ren, K.~Wang, W.~Xu, M.~Elkashlan, A.~Nallanathan, and L.~Hanzo, ``Multicell {MIMO} communications relying on intelligent reflecting surfaces,'' \emph{IEEE Trans. Wirel. Commun.}, vol.~19, no.~8, pp. 5218--5233, 2020.

\bibitem{zhang2021robust}
L.~Zhang, C.~Pan, Y.~Wang, H.~Ren, and K.~Wang, ``Robust beamforming design for intelligent reflecting surface aided cognitive radio systems with imperfect cascaded {CSI},'' \emph{IEEE Trans. Cogn. Commun. Netw.}, vol.~8, no.~1, pp. 186--201, 2021.

\bibitem{huang2020reconfigurable}
C.~Huang, R.~Mo, and C.~Yuen, ``Reconfigurable intelligent surface assisted multiuser {MISO} systems exploiting deep reinforcement learning,'' \emph{IEEE J. Sel. Areas Commun.}, vol.~38, no.~8, pp. 1839--1850, 2020.

\bibitem{wu2019beamforming}
Q.~Wu and R.~Zhang, ``Beamforming optimization for wireless network aided by intelligent reflecting surface with discrete phase shifts,'' \emph{IEEE Trans. Commun.}, vol.~68, no.~3, pp. 1838--1851, 2019.

\bibitem{shao2019framework}
M.~Shao, Q.~Li, W.-K. Ma, and A.~M.-C. So, ``A framework for one-bit and constant-envelope precoding over multiuser massive {MISO} channels,'' \emph{IEEE Trans. Signal Process.}, vol.~67, no.~20, pp. 5309--5324, 2019.

\bibitem{dai2021reconfigurable}
J.~Dai, Y.~Wang, C.~Pan, K.~Zhi, H.~Ren, and K.~Wang, ``Reconfigurable intelligent surface aided massive {MIMO} systems with low-resolution {DACs},'' \emph{IEEE Commun. Lett.}, vol.~25, no.~9, pp. 3124--3128, 2021.

\bibitem{shen_fractional_2018}
K.~Shen and W.~Yu, ``Fractional programming for communication systems—part {I}: Power control and beamforming,'' \emph{IEEE Trans. Signal Process.}, vol.~66, no.~10, pp. 2616--2630, 2018.

\bibitem{tang_path_2022}
W.~Tang, X.~Chen, M.~Z. Chen, J.~Y. Dai, Y.~Han, M.~Di~Renzo, S.~Jin, Q.~Cheng, and T.~J. Cui, ``Path loss modeling and measurements for reconfigurable intelligent surfaces in the millimeter-wave frequency band,'' \emph{IEEE Trans. Commun.}, vol.~70, no.~9, pp. 6259--6276, 2022.

\bibitem{Diederik_Adam_2015}
D.~P. Kingma and J.~Ba, ``Adam: A method for stochastic optimization,'' \emph{arXiv:1412.6980}, 2015.

\bibitem{Gan_RIS_2021}
X.~Gan, C.~Zhong, C.~Huang, and Z.~Zhang, ``Ris-assisted multi-user {MISO} communications exploiting statistical {CSI},'' \emph{IEEE Trans. Commun.}, vol.~69, no.~10, pp. 6781--6792, 2021.

\end{thebibliography}


\end{document}